\begin{document}
\title{The Role of Rank in Mismatched Low-Rank Symmetric Matrix Estimation
\thanks{The first two authors contributed equally to this work, in a random coin order, and ${\dagger}$ marks the corresponding author.
This work was partially supported by National Natural Science Foundation of China Grant No. 12025104.}
}


\author{%
 \IEEEauthorblockN{Panpan Niu, Yuhao Liu, Teng Fu, Jie Fan, Chaowen Deng and Zhongyi Huang$^{\dagger}$}
\IEEEauthorblockA{Department of Mathematical Sciences, Tsinghua University, Beijing, China\\
                   Email:  \{npp21, yh-liu21, fut21, fanj21  and dcw21\}@mails.tsinghua.edu.cn, zhongyih@tsinghua.edu.cn} 
}


\maketitle


\begin{abstract}
    We investigate the performance of a Bayesian statistician tasked with recovering a rank-\(k\) signal matrix \(\bS \bS^{\top} \in \mathbb{R}^{n \times n}\), corrupted by element-wise additive Gaussian noise. This problem lies at the core of numerous applications in machine learning, signal processing, and statistics.  We derive an analytic expression for the asymptotic mean-square error (MSE) of the Bayesian estimator under mismatches in the assumed signal rank, signal power, and signal-to-noise ratio (SNR), considering both sphere and Gaussian signals. Additionally, we conduct a rigorous analysis of how rank mismatch influences the asymptotic MSE. Our primary technical tools include the spectrum of Gaussian orthogonal ensembles (GOE) with low-rank perturbations and asymptotic behavior of \(k\)-dimensional spherical integrals.
\end{abstract}

\section{Introduction}
The problem of low-rank matrix recovery from noisy observations is central to a wide range of applications in machine learning, signal processing, and statistics. Notable examples include community detection under the stochastic block model \cite{moore2017computer,abbe2018community}, matrix completion \cite{keshavan2010matrix,saade2015matrix}, and sparse principal component analysis (PCA) \cite{zou2006sparse,johnstone2009consistency}, among others. In a symmetric setting, the signal matrix $\bS \in \RR^{n \times k}$ is generated with a prior $P_{\bS}$, and the goal is to estimate the matrix $\bS \bS^\top$ from the noisy symmetric observation $\bY$, through element-wise additive Gaussian noise.

Various optimization methods, including rank-penalized \cite{bunea2011optimal}, convex relaxation \cite{candes2010power}, gradient-based \cite{chi2019nonconvex}, and spectral-based \cite{chen2021spectral,candes2011robust,wainwright2019high} algorithms, have been widely applied to low-rank matrix recovery. When structural information about $\bS$ is available, Bayesian methods provide an alternative framework. In particular, the Bayesian minimum mean square error (MMSE) estimator, studied under the "Bayesian-optimal" setting, presumes knowledge of both the prior $P_{\bS}$ and relevant hyperparameters (e.g., SNR). The statistical limitation of the Bayesian MMSE estimator can be characterized by examining the mutual information between observations and signals, and then applying the I-MMSE formula \cite{guo2005mutual} to deduce MMSE and the fundamental information-theoretic constraints. A non-rigorous strategy for computing the mutual information is to view it through the lens of \emph{free energy} and then analyze it using the replica and cavity methods borrowed from statistical physics \cite{korada2009exact,lesieur2015mmse}. These results have been rigorously justified via interpolation techniques \cite{dia2016mutual,lelarge2017fundamental,barbier2019adaptive,barbier2019adaptive2}. Another question we are concerned with is computation limitation: whether the Bayesian MMSE estimator can be computed within polynomial time $O(\text{poly}(n))$. Traditional approaches, such as Markov chain Monte Carlo \cite{salakhutdinov2008bayesian} and belief propagation \cite{pearl2022reverend}, typically require exponential time $O(e^n)$.  To circumvent this computational bottleneck, the approximate message passing (AMP) algorithm \cite{lesieur2017constrained,montanari2021estimation,fan2022approximate, camilli2024decimation} has been proposed for use in high-SNR regimes. Finally, extensions to the non-symmetric low-rank matrix recovery are discussed in \cite{barbier2017layered,kadmon2018statistical,aubin2019spiked,luneau2020high}.

The "Bayes-optimal" framework critically depends on the assumption that the observer possesses exact knowledge of both the prior $P_{\bS}$ and SNR. Under these ideal conditions, the "Nishimori identity" provides a key analytical simplification. In practice, however, such perfect knowledge is rarely available, and one often encounters a \emph{mismatched} setting wherein the observer’s model assumptions deviate from the true underlying signal and noise processes. Recent work on mismatched settings for symmetric rank-one matrix estimation has focused on mean square error (MSE) analyses under scenarios in which the observer knows that both the signal and the noise follow Gaussian distributions, but lacks precise information about their respective powers or the exact SNR \cite{pourkamali2022mismatched,camilli2022inference}. In \cite{pourkamali2022mismatched2}, this analysis is extended to the non-symmetric regime. Furthermore, \cite{barbier2022price} broadens the scope by considering mismatch in both hyperparameters and the noise structure—specifically, allowing the noise to be rotationally invariant while still inferring under a Gaussian noise assumption—and \cite{fu2023mismatched} extends these results to the non-symmetric setting. However, in many practical scenarios, the signal $\bS$ is low-rank but not necessarily rank-one, and its rank may be unknown to the observer. In this work, we analyze the statistical limitations of the Bayesian MMSE estimator under these mismatched conditions, and summarize our main contributions as follows:
\begin{itemize}
    \item We derive the asymptotic MSE of the Bayesian MMSE estimator under mismatches in the signal rank, signal power and SNR, first for spherical signals and then extending the analysis to Gaussian signals. Our primary techniques are based on limitation spectral of GOE with low-rank perturbations \cite{benaych2011eigenvalues} and
    the asymptotic behavior of $k$-dimensional spherical integrals \cite{guionnet2005fourier, collins2007new,guionnet2021asymptotics,maillard2019high}.
    \item We provide a comprehensive theoretical investigation of how rank affects the asymptotic MSE, demonstrating that it is governed by three key factors: \textit{effective rank},  \textit{inference rank}, and \textit{overfitting rank}. 
\end{itemize}

\noindent
\textit{Notation.} 
Let $\langle\cdot,\cdot\rangle$ be the standard inner product in $\RR^n$. Denote by $\delta(\cdot)$ the Dirac delta function, by $\Gamma(\cdot)$ the Gamma function, and by $\SS^{n-1}(r)$ the sphere of radius $r$ in $\RR^n$. Let $U(\Omega)$ denote the uniform distribution over a compact set $\Omega \subset \RR^n$.

\section{Problem Set-up}
We consider the problem of recovering a low-rank signal matrix from noisy observations. Specifically, let $\bY \in \RR^{n\times n }$ be defined by
\begin{equation*}
    \bY = \sqrt{\frac{\lambda_*}{n}} \sum_{i=1}^r \alpha_i \bs_i \bs_i^\top + \bW,
\end{equation*}
where $\lambda_* \in \RR^{+}$ is the true signal-to-noise ratio (SNR). The matrix $\bW \in \RR^{n \times n}$ is a symmetric noise drawn from the \textit{Gaussian orthogonal ensemble} (GOE), in which off-diagonal entries are \iid $\cN(0,1)$  and diagonal entries are \iid $\cN(0,2)$. The vectors $\bs_1, \cdots, \bs_r \in \SS^{n-1}(\sqrt{n})$ represent the true signal components, generated with the following prior
\begin{equation*}
    P(\bS) = \frac{1}{Z_{n,r}} \prod_{i=1}^r \delta \left( \frac{1}{n} \|\bs_i\|_2^2 - 1\right) \prod_{i<j} \delta \left( \frac{1}{n} \langle \bs_i, \bs_j\rangle\right),
\end{equation*}
where $Z_{n,r} = \prod_{i=1}^{r}\left(\frac{2 \pi^{(n-i+1) / 2}}{\Gamma((n-i+1) / 2)} n^{\frac{n-i}{2}}\right)$ is the normalization factor. The scalars $\alpha_1, \cdots, \alpha_r \in \RR^{+}$ denote the respective signal strengths of each component. Our task is to infer $\bS_{\bm{\alpha}} \bS_{\bm{\alpha}}^{\top}$ given the observation $\bY$ where $\bS_{\bm{\alpha}} =  [\sqrt{\alpha_1}\bs_{1}, \cdots, \sqrt{\alpha_r}\bs_{r}]$. In practice, although one knows the uniform-sphere prior and the additive GOE noise structure, the number of signal components (i.e., the rank of $\bS_{\alpha}$), their strengths, and the true SNR are typically unknown. Consequently, we consider inferring the signal matrix by assuming $k$ signal components, each with strength $\beta_i$ for $i=1, \cdots, k$, under an inferred SNR $\lambda$. The mismatched posterior distribution employed for inference is given by 
\begin{equation*}
    P_{\text{mis}}(\mathrm{d} \bX \mid \bY) = \frac{1}{Z_n(\bY)} P(\bX)  e^{-\frac{1}{4} \| \sqrt{\frac{\lambda}{n}}\bX_{\bm{\beta}} \bX_{\bm{\beta}}^{\top} - \bY \|_F^2} \mathrm{d}\bX,
\end{equation*}
where $\bX_{\bm{\beta}} = [\sqrt{\beta_1} \bx_1, \cdots, \sqrt{\beta_k} \bx_k]$ and $Z_n(\bY)$ is the normalization constant (i.e., partition function), defined by
\begin{equation} \label{eq:partition}
    Z_n(\bY) = \int P(\bX) e^{-\frac{1}{4} \| \sqrt{\frac{\lambda}{n}}\bX_{\bm{\beta}} \bX_{\bm{\beta}}^{\top} - \bY \|_F^2} \mathrm{d}\bX.
\end{equation}
The associated \emph{mismatched} Bayes estimator is
\begin{equation*}
    M_{\text{mis}}(\bY) = \int \bX_{\bm{\beta}} \bX_{\bm{\beta}}^{\top} P_{\text{mis}}(\bX \mid \bY) \mathrm{d} \bX.
\end{equation*}
Our goal is to analyze the asymptotic \emph{mismatched} matrix-MSE between $\bS_{\bm{\alpha}} \bS_{\bm{\alpha}}^{\top}$ and $M_{\text{mis}}(\bY)$ given by 
\begin{equation*}
    \mathrm{MSE}_{n}^{\mathrm{Sph}}(\bm{\alpha}, \bm{\beta},\lambda_*, \lambda) = \frac{1}{n^2} \EE_{\bS, \bW} \left[ 
    \| \bS_{\bm{\alpha}} \bS_{\bm{\alpha}}^{\top} - M_{\text{mis}}(\bY)\|_F^2
    \right].
\end{equation*}

\begin{remark}
    The orthogonality requirement on the signal vectors in the prior is largely a technical assumption. In practice, one may sample $\bs_1, \cdots, \bs_r $ \iid from $U(\SS^{n-1}(\sqrt{n}))$. Define $U_{ij} = \cos(\theta(\bs_i, \bs_j))$.  By projective limit theorem \cite{vershynin2018high}, we have 
    \begin{equation*}
        \sqrt{n} U_{ij} \xrightarrow{d} \cN(0,1) ,
    \end{equation*}
    which implies
    \begin{equation*}
        P(\{\exists i,j: n^{-1}|\langle \bs_i, \bs_j\rangle | > \epsilon\}) \leq r^2 e^{-\frac{n \epsilon^2}{2}}
    \end{equation*}
    for sufficiently large $n$. Hence, as $r = O(1)$ and $n \to \infty$, we can effectively treat vectors $\bs_i$ and $\bs_j$ as orthogonal.
\end{remark}


\vspace{-0.2em}
\section{Main result}
\vspace{-0.2em}
This section derives the asymptotic expression for the mismatched matrix-MSE and offers a comprehensive analysis of its structure. We begin by outlining the necessary assumptions and definitions.
\begin{assumption}[Low Rank]\label{ass:low-rank}
\vspace{-0.2em}
     Both the true and inference-model signals have low rank, i.e., $r, k = O(1)\ \text{w.r.t.} \ n$.
     \vspace{-0.2em}
\end{assumption}
\begin{assumption}[Decreasing Power]\label{ass:power}
    In both the true and inference models, the signal powers are in non-increasing order. Specifically, 
    \begin{equation*}
        \alpha_1\geq \alpha_2 \geq \cdots \geq \alpha_r>0, \quad \beta_1 \geq \beta_2 \geq \cdots \geq \beta_k>0.
    \end{equation*}
\end{assumption}
\begin{remark} 
This decreasing power assumption is both natural and entails no loss of generality. Indeed, if the powers are not in non-increasing order, one may simply multiply the model by an appropriate permutation matrix to reorder them, leaving the signal and noise distributions unchanged.
\end{remark}
We now introduce three notions of rank crucial for analysis.
\begin{definition}[Effective Rank]
\vspace{-0.2em}
The \textit{effective rank} \(d\) is defined as the largest integer satisfying
\(\sqrt{\lambda_*} \alpha_d >1\). If no such integer exists, we set $d = 0$.
\vspace{-0.2em}
\end{definition}
\begin{definition}[Inference Rank]
\vspace{-0.2em}
The \textit{inference rank} \(c\) is defined as the largest integer less than $\min \{k, d\}$ satisfying $\sqrt{\lambda_* \lambda} \, \alpha_c \beta_c > 1$.
If no such integer exists, we set $c = 0$.
\vspace{-0.2em}
\end{definition}
\begin{definition}[Overfitting Rank]
\vspace{-0.2em}
If $k>d$, the \textit{overfitting rank} $e$ is defined as the largest integer greater than $d$ satisfying: $\sqrt{\lambda} \beta_e >1$. If no such integer exists, we set $e = d$.
\vspace{-0.2em}
\end{definition}
\vspace{-0.5em}
We now present the asymptotic expression for the mismatched matrix-MSE.\vspace{-0.2em}
\begin{theorem}\label{thm:sph-MSE}
    Suppose Assumptions~\ref{ass:low-rank} and ~\ref{ass:power} hold, and let $(\lambda_*, \lambda) \in \RR^{+} \times \RR^+$. Then the sequence $(\mathrm{MSE}_n^{\mathrm{Sph}}(\bm{\alpha}, \bm{\beta}, \lambda_*, \lambda))_{n \geq 1}$ converges to 
    \begin{equation}\label{eq:sph-MSE}
        \begin{aligned}
            &\lim_{n \to \infty} \mathrm{MSE}_n^{\mathrm{Sph}}(\bm{\alpha}, \bm{\beta}, \lambda_*, \lambda) = \underbrace{\sum_{i=1}^c g^{\mathrm{Sph}}(\alpha_i, \beta_i, \lambda_*, \lambda)}_{\text{inference term}} \\ &\quad \quad \quad 
            + \underbrace{\sum_{i=d+1}^{e} \left( \beta_i - 1 / \sqrt{\lambda}\right)^2}_{\text{overfitting term}} + \underbrace{\sum_{i=1}^{r} \alpha_i^2}_{\text{constant term}},
        \end{aligned}
    \end{equation}  
    where $g^{\mathrm{Sph}}(\alpha, \beta, \lambda, \lambda_*)$ is given by 
    \begin{align*}
    g^{\mathrm{Sph}}(\alpha, \beta, \lambda_*, \lambda) &=\beta\left(\beta-2\alpha\right) +\frac{2\beta}{\alpha}\left(\frac{1}{\lambda_*}-\frac{1}{\sqrt{\lambda_{*}\lambda} }\right)\nonumber\\
    &\quad+2\sqrt{\frac{\lambda_{*}}{\lambda}}\left(\frac{1}{\lambda_{*}}-\frac{1}{\alpha^{2}\lambda_{*}^{2}}\right)+\frac{1}{\alpha^{2}\lambda_{*}\lambda}.
    \end{align*}
    By convention, if in \eqref{eq:sph-MSE} the lower index of a summation exceeds its upper index, that summation is taken to be zero.
\end{theorem}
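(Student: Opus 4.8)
\emph{Proof proposal.} The plan is to reduce the matrix-MSE to a few scalar overlaps, evaluate those with the spectral theory of low-rank perturbations of GOE together with the asymptotics of $k$-dimensional spherical integrals, and then carry out the bookkeeping. First I would expand
\[
\frac{1}{n^2}\bigl\|\bS_{\bm{\alpha}}\bS_{\bm{\alpha}}^\top - M_{\mathrm{mis}}(\bY)\bigr\|_F^2
= \frac{1}{n^2}\bigl\|\bS_{\bm{\alpha}}\bS_{\bm{\alpha}}^\top\bigr\|_F^2
- \frac{2}{n^2}\,\mathrm{Tr}\bigl(\bS_{\bm{\alpha}}\bS_{\bm{\alpha}}^\top M_{\mathrm{mis}}(\bY)\bigr)
+ \frac{1}{n^2}\bigl\|M_{\mathrm{mis}}(\bY)\bigr\|_F^2 .
\]
Since the prior $P$ forces $\langle\bs_i,\bs_j\rangle=n\delta_{ij}$, the first term equals $\sum_{i=1}^{r}\alpha_i^2$, the constant term of \eqref{eq:sph-MSE}. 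Writing $\widehat{\bs}_i:=\bs_i/\sqrt n$ and $\widehat{\bx}_i:=\bx_i/\sqrt n$, letting $\bX,\bX'$ be two i.i.d.\ draws from $P_{\mathrm{mis}}(\cdot\mid\bY)$, and using that $\|M_{\mathrm{mis}}(\bY)\|_F^2$ is the posterior expectation of $\mathrm{Tr}(\bX_{\bm{\beta}}\bX_{\bm{\beta}}^\top\bX'_{\bm{\beta}}(\bX'_{\bm{\beta}})^\top)$ over the pair $(\bX,\bX')$, the two remaining terms become $-2\sum_{i\le r,\,j\le k}\alpha_i\beta_j\,\EE\langle\widehat{\bs}_i,\widehat{\bx}_j\rangle^2$ and $\sum_{j,j'\le k}\beta_j\beta_{j'}\,\EE\langle\widehat{\bx}_j,\widehat{\bx}'_{j'}\rangle^2$, the expectations running over $\bS,\bW$ and the posterior. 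All of these overlaps are bounded by $1$, so it suffices to identify the almost-sure limits of $\langle\widehat{\bs}_i,\widehat{\bx}_j\rangle^2$ and $\langle\widehat{\bx}_j,\widehat{\bx}'_{j'}\rangle^2$ under $P_{\mathrm{mis}}$ and pass to the limit by dominated convergence.

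Next I would simplify the posterior. Because $\langle\bx_i,\bx_j\rangle=n\delta_{ij}$ on the support of $P$, the quartic term $\|\sqrt{\lambda/n}\,\bX_{\bm{\beta}}\bX_{\bm{\beta}}^\top\|_F^2$ is constant there, so, with $\widetilde\bY:=\bY/\sqrt n$,
\[
P_{\mathrm{mis}}(\bX\mid\bY)\;\propto\;P(\bX)\exp\!\Bigl(\frac{n}{2}\sum_{i=1}^{k}\sqrt\lambda\,\beta_i\,\langle\widehat{\bx}_i,\widetilde\bY\widehat{\bx}_i\rangle\Bigr):
\]
a quadratic Gibbs measure on $k$-tuples of orthonormal vectors, whose partition function is a rank-$k$ spherical integral with charges $\theta_i:=\sqrt\lambda\,\beta_i$ at $\widetilde\bY=\sqrt{\lambda_*}\sum_{i=1}^{r}\alpha_i\widehat{\bs}_i\widehat{\bs}_i^\top+\bW/\sqrt n$, a rank-$r$ perturbation of a standard GOE. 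By \cite{benaych2011eigenvalues}: for $i\le d$, $\widetilde\bY$ has an outlier eigenvalue $\rho_i=\sqrt{\lambda_*}\alpha_i+1/(\sqrt{\lambda_*}\alpha_i)$ with unit eigenvector $\bv_i$ satisfying $\langle\bv_i,\widehat{\bs}_i\rangle^2\to 1-1/(\lambda_*\alpha_i^2)$ and $\langle\bv_i,\widehat{\bs}_j\rangle\to0$ for $j\ne i$; for $i>d$ the $i$-th eigenvalue converges to the bulk edge $2$ and the $i$-th eigenvector is asymptotically orthogonal to every $\widehat{\bs}_j$. A useful identity is $H_{\mathrm{sc}}(\rho_i)=1/(\sqrt{\lambda_*}\alpha_i)$ and $H_{\mathrm{sc}}(2)=1$, where $H_{\mathrm{sc}}(z)=\tfrac12(z-\sqrt{z^2-4})$ is the Stieltjes transform of the semicircle law.

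The core step is a localization statement for this Gibbs measure, extracted from the asymptotics of $k$-dimensional spherical integrals \cite{guionnet2005fourier,collins2007new,guionnet2021asymptotics,maillard2019high}. Since the charges $\theta_1\ge\cdots\ge\theta_k$ and eigenvalues $\rho_1>\rho_2>\cdots$ are ordered and $O(1)$, a rearrangement argument decouples the rank-$k$ integral: $\widehat{\bx}_i$ effectively pairs with the $i$-th eigenpair $(\rho_i,\bv_i)$ of $\widetilde\bY$. A Laplace analysis of the scalar problem $\max_{t\in[0,1]}\bigl[\tfrac12\log(1-t)+\tfrac{\theta_i\rho_i}{2}t+\mathcal{J}_{\mathrm{sc}}(\theta_i(1-t))\bigr]$, where $\mathcal{J}_{\mathrm{sc}}(\theta')=(\theta')^2/4$ for $\theta'\le1$ is the semicircle spherical-integral rate function and $t=\langle\widehat{\bx}_i,\bv_i\rangle^2$, yields the maximizer $t_i=\bigl(1-H_{\mathrm{sc}}(\rho_i)/\theta_i\bigr)_+$, and shows the Gibbs measure concentrates $\widehat{\bx}_i$ (and, independently, the replica $\widehat{\bx}'_i$) on $\pm\bv_i$ with $\langle\widehat{\bx}_i,\bv_i\rangle^2\to t_i$, every other overlap among $\{\widehat{\bx}_i\}$, $\{\bv_j\}$ and $\{\widehat{\bs}_j\}$ vanishing. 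Comparing $t_i>0$ with the three rank definitions: for $i\le c$ one has $\theta_i>H_{\mathrm{sc}}(\rho_i)=1/(\sqrt{\lambda_*}\alpha_i)$ (outlier regime) and $t_i=1-1/(\sqrt{\lambda_*\lambda}\,\alpha_i\beta_i)$; for $d<i\le e$ one has $\theta_i=\sqrt\lambda\beta_i>H_{\mathrm{sc}}(2)=1$ (bulk-edge/overfitting regime) and $t_i=1-1/(\sqrt\lambda\,\beta_i)$ with $\bv_i$ orthogonal to all true signals; otherwise $\widehat{\bx}_i$ is delocalized and $t_i=0$. From $\langle\widehat{\bs}_i,\widehat{\bx}_i\rangle=\langle\widehat{\bs}_i,\bv_i\rangle\langle\bv_i,\widehat{\bx}_i\rangle+o(1)$ and $\langle\widehat{\bx}_i,\widehat{\bx}'_i\rangle=\langle\widehat{\bx}_i,\bv_i\rangle\langle\bv_i,\widehat{\bx}'_i\rangle+o(1)$ (transverse parts being inner products of asymptotically delocalized vectors), the only surviving limiting squared overlaps are $(1-1/(\lambda_*\alpha_i^2))\,t_i$ for $\langle\widehat{\bs}_i,\widehat{\bx}_i\rangle^2$ with $i\le c$, and $t_i^2$ for $\langle\widehat{\bx}_i,\widehat{\bx}'_i\rangle^2$; inserting these into $\sum_i\alpha_i^2-2\sum_{i,j}\alpha_i\beta_j\langle\widehat{\bs}_i,\widehat{\bx}_j\rangle^2+\sum_{j,j'}\beta_j\beta_{j'}\langle\widehat{\bx}_j,\widehat{\bx}'_{j'}\rangle^2$, the terms $i\le c$ reduce after elementary algebra to $g^{\mathrm{Sph}}(\alpha_i,\beta_i,\lambda_*,\lambda)$, the terms $d<i\le e$ to $\beta_i^2 t_i^2=(\beta_i-1/\sqrt\lambda)^2$, and the rest to $0$, which is precisely \eqref{eq:sph-MSE}.

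I expect the main obstacle to be the localization statement of the third paragraph — rigorously controlling the rank-$k$ spherical Gibbs measure, in particular showing that the orthogonality constraint pins $\widehat{\bx}_i$ to the $i$-th outlier direction $\bv_i$, that the quadratic forms $\langle\widehat{\bx}_i,\widetilde\bY\widehat{\bx}_i\rangle$ and all pairwise overlaps concentrate at the variational optimum, and that the $O(1)$-rank spike does not disturb the leading-order spherical-integral asymptotics. Two secondary points are justifying, via the $\widehat{\bx}_i\mapsto-\widehat{\bx}_i$ symmetry of the Hamiltonian, that two independent replicas yield $\langle\widehat{\bx}_i,\widehat{\bx}'_i\rangle^2\to t_i^2$ rather than $t_i$, and handling the outer expectation over $\bW$ through almost-sure convergence of the spiked-GOE spectral data together with the uniform integrability noted in the first paragraph.
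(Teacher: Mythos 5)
Your proposal is algebraically correct — I checked that your limiting overlaps $(1-1/(\lambda_*\alpha_i^2))\,t_i$ and $t_i^2$ do reproduce $g^{\mathrm{Sph}}$ and $(\beta_i-1/\sqrt{\lambda})^2$ exactly — but it takes a genuinely different route from the paper. The paper never touches overlap concentration: it first proves a generalized I-MMSE identity (Lemma~\ref{lem:gen_IMMSE}, via Stein's lemma applied to the Gaussian noise), which expresses $\tfrac14\mathrm{MSE}_n$ exactly at finite $n$ as $\partial_\lambda f_n + (2-\sqrt{\lambda_*/\lambda})\sqrt{\lambda_*/\lambda}\,\partial_{\lambda_*} f_n + \tfrac14\sum_i\alpha_i^2$, and then only needs the leading-order asymptotics of the log-partition function $f_n$, obtained by plugging the spiked-GOE outliers $\gamma_i=\sqrt{\lambda_*}\alpha_i+1/(\sqrt{\lambda_*}\alpha_i)$ into the rank-$k$ spherical-integral formula of Guionnet--Husson (Theorem~\ref{thm:high_SNR_sphere_int}), with an $\epsilon$-regularization to create $k-d$ artificial outliers at the bulk edge when $k>d$. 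Differentiating the limiting free energy in $\lambda,\lambda_*$ then yields \eqref{eq:sph-MSE}. Your route instead expands the MSE into posterior overlaps and requires the Gibbs measure to localize each $\widehat{\bx}_i$ on the $i$-th eigenvector with the prescribed overlap $t_i$ — precisely the step you flag as the main obstacle, and rightly so: that concentration statement is not off-the-shelf in \cite{guionnet2021asymptotics} and is substantially harder than the free-energy computation itself (one typically proves it \emph{by} perturbing the free energy and differentiating, which brings you back to the paper's strategy). What your approach buys is finer information (the actual limiting overlaps, the replica structure, the sign symmetry) and a transparent interpretation of each term; what the paper's approach buys is that the only analytic inputs are the log-asymptotics of $Z_n(\bY)$ and an interchange of limit and derivative, at the price of the somewhat unintuitive coefficient $(2-\sqrt{\lambda_*/\lambda})\sqrt{\lambda_*/\lambda}$ in the I-MMSE identity. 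If you pursue your route, the rearrangement/pairing argument for the rank-$k$ constrained integral and the replica factorization $\langle\widehat{\bx}_i,\widehat{\bx}'_i\rangle^2\to t_i^2$ are the two places where a complete proof would need real work.
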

From Theorem~\ref{thm:sph-MSE}, the asymptotic mismatched matrix-MSE decomposes into three terms that align with the three ranks introduced earlier. 
\begin{itemize}
    \item The effective rank $d$ indicates the number of signal components that are recoverable in an information-theoretic sense; from a random matrix perspective, $d$ corresponds to the number of outlier eigenvalues beyond the semicircle law of the GOE \cite{benaych2011eigenvalues}. If both the true SNR $\lambda_*$ and the signal power $\alpha_i$ are sufficiently small (i.e., $d = 0$), then all signal components are subsumed by noise, no outlier eigenvalues appear, and consequently, no signal components can be recovered.
    \item The inference rank $c$ quantifies how many of these $d$ recoverable components are actually perceived by the inference model. This rank contributes to the \emph{inference term}, the only component in the asymptotic mismatched matrix-MSE that can be negative possibly. When the inference SNR and the inference power $\beta_i$ are sufficiently large, so that $\sqrt{\lambda} \beta_i$  exceeds $1/ (\sqrt{\lambda_*} \alpha_i)$, the inference model perceives the true signal components. If the mismatch between the true parameters $(\lambda_*, \alpha_i)$ and the inferred parameters $(\lambda, \beta_i)$ is small, an extreme case being $\alpha = \beta$ and $\lambda_* = \lambda$, then 
    \begin{equation*}
        g^{\mathrm{Sph}} (\alpha, \beta, \lambda_*, \lambda) = - (\alpha+\frac{1}{\lambda \alpha})^2<0,
    \end{equation*}
    which implies that recovering these components reduces the MSE. However, when the mismatch is large (for example, $\beta= 2 \alpha$ while $\lambda = \lambda_*$), one finds 
    \begin{equation*}
        g^{\mathrm{Sph}} (\alpha, \beta, \lambda_*, \lambda) = \frac{2}{ \lambda} - \frac{1}{\lambda^2 \alpha^2} >0,
    \end{equation*}
    as $\sqrt{\lambda} \alpha>1$, meaning that although the model detects the signal components, the misalignment of parameters inflates the MSE. For indices $i$ ranging from $c+1$ to $d$, the inference model fails to detect these true components (since $\sqrt{\lambda} \beta_i < 1/ (\sqrt{\lambda_*} \alpha_i)$), and they therefore do not affect the asymptotic MSE.
    \item The overfitting rank $e$ specifies the maximum number of signal components that the inference model can perceive. To ensure all effectively recoverable components are inferred, one must set $e$ to be at least $d$. However, setting $e$ too large is detrimental, as the inference model inadvertently perceives pure noise from the true model as signal, contributing the \emph{overfitting term}, which is positive and increases the MSE. For indices \(i\) from \(e+1\) to \(r\), the inference model perceives neither signal nor noise, so these components do not affect the MSE.
    \item The final \emph{constant term} depends solely on the number and power of the true signal components, independent of the inference model.
\end{itemize}
We observe that the relationship between the asymptotic mismatched matrix-MSE and the ranks is not captured merely by comparing $r$ and $k$. Unlike the traditional perspective, where $r>k$ is deemed \emph{under-parameterized} and $r<k$ is deemed \emph{over-parameterized} \cite{bodin2023gradient}, the behavior here is governed by the interplay among the effective rank $d$, the inference rank $c$, and the overfitting rank $e$. Specifically, the under-parameterized regime arises $k<d$, whereas the over-parameterized regime arises when $e>d$. Figure~\ref{fig:cases row} illustrates the composition of the MSE for these two scenarios.
\begin{figure}[tbp] 
    \centering
\includegraphics[width=0.48\textwidth]{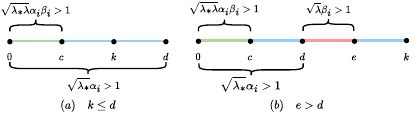}
    \vspace{-0.9em}
    \caption{Under-parameterized  (left) versus over-parameterized  (right). The green region denotes the inference term,  which may decrease the asymptotic mismatched MSE; the red region denotes the overfitting term, which increases it. The blue region ($c<i \leq d$ or $i>e$) contributes nothing to the asymptotic mismatched MSE. The constant term is omitted in the figure.
    }
    \label{fig:cases row}
    \vspace{-1.7em}
\end{figure}

We give a proof sketch of Theorem \ref{thm:sph-MSE} in Section \ref{sec:analysis} and a detailed proof in the Appendix~\ref{app: main them}.

\vspace{-0.5em}
\subsection{MSE with Matched SNR}
We examine how the inference rank and inferred powers $\{\beta_i\}_{i=1}^k$ affect the asymptotic mismatched asymptotic MSE in the SNR-matched setting $\lambda = \lambda_*$. In this regime, 
\begin{align*}
\lim_{n \to \infty} \mathrm{MSE}_n^{\mathrm{Sph}}&= 
\sum_{i=1}^c\left[\beta_i(\beta_i-2\alpha_i) +\frac{2}{\lambda}-\frac{1}{\lambda^2 \alpha_i^{2}}\right]\\
&\quad+\sum_{i=d+1}^e(\beta_i-\frac{1}{\sqrt{\lambda}})^2+\sum_{i=1}^r \alpha_i^2.
\end{align*}
Figure~\ref{fig:MSE_lam_combined} plots the asymptotic mismatched MSE versus SNR $\lambda$  for multiple inference ranks $k$ and power $\beta_i$. In the left subfigure, in the low-SNR regime ($\lambda<1$), the effective rank is $d = 0$; no signal components are detectable, and the MSE equals the total power of the true signal, $3$. In the high-SNR regime, the effective rank jumps to $d=3$. The "Bayes-optimal" setting corresponds to $k=3$ ($c=e=3$),  where all signal components are recovered without introducing any overfitting term; thus, as $\lambda$ increases, the the MSE decreases monotonically and tends to zero (perfect recovery) as $\lambda \to \infty$. The under-parameterized cases correspond to $k=1$ and $k=2$, with inference rank $c= 1, 2$, respectively. Each model recovers its first $c$ signal components, so the MSE decreases with $\lambda$ and converges to the power of the unrecovered components, $2$ for $k=1$ and $1$ for $k=2$. The over-parameterized cases correspond to $k=4$ and $k=5$, both with inference rank $c=3$ and overfitting ranks $e= 4$ and $e=5$, respectively. As $\lambda$ increases, the MSE initially decreases—reflecting the recovery of all true components while the overfitting term remains negligible—but eventually rises as the overfitting term becomes significant, ultimately tending to the power of the extra inferred components: $1$ for $k= 4$ and $2$ for $k= 5$. The right subfigure reveals a counterintuitive phenomenon, consistent with the findings of \cite{pourkamali2022mismatched}: when $\beta_i>\alpha_i$, the asymptotic mismatched MSE initially increases and then decreases with $\lambda$. This behavior arises because the overfitting term—a positive contribution—is introduced when $\sqrt{\lambda}> 1/ \beta_i$, prior to the phase transition where the effective rank jumps to $2$ (occurring when $\sqrt{\lambda}> 0.5$). This phenomenon does not occur when $\beta_i < \alpha_i$, as the critical SNR for introducing the overfitting term exceeds that for the phase transition of the effective rank.

\begin{figure}[tbp] 
    \centering
\includegraphics[width=0.48\textwidth]{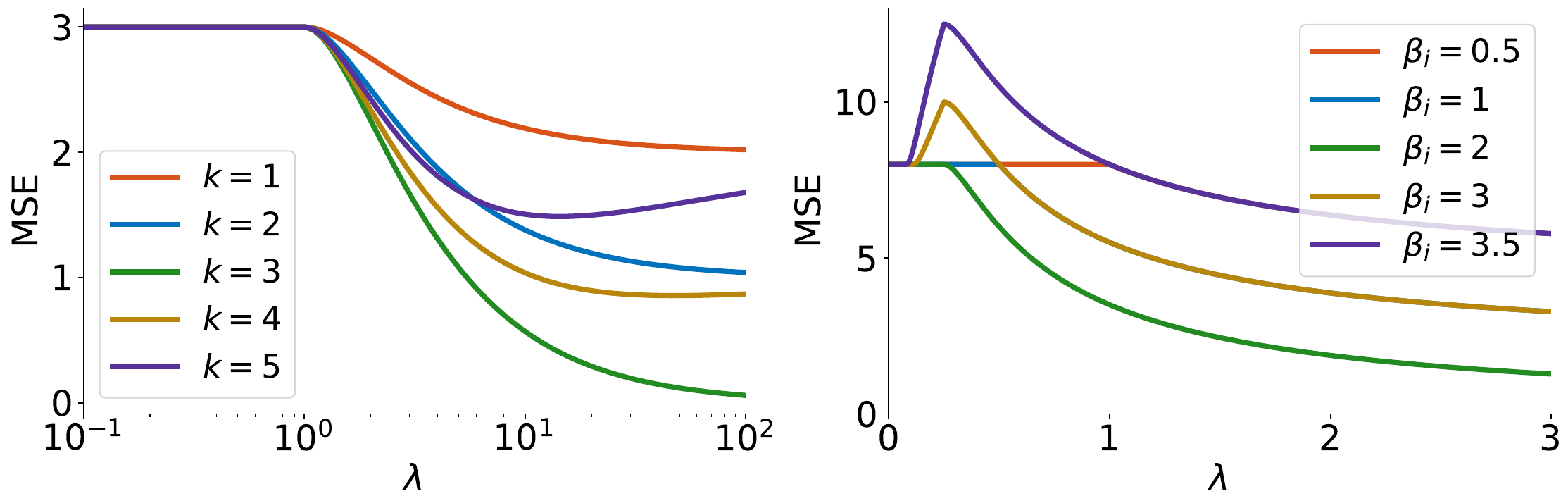}
    \vspace{-1em}
    \caption{ 
    Depiction of asymptotic mismatched MSE as a function of the SNR $\lambda=\lambda_*$ under varying inference-model rank $k$ and signal power $\beta_i$.
    Left: the parameter \(\balpha\) is set to \([1, 1, 1]\), and the inference-model exhibiting various ranks $k$,  with each component power is $1$.
    Right: the parameter \(\balpha\) is set to \([2, 2]\), and the inference-model exhibiting various signal powers with $k=2$.
    }
    \label{fig:MSE_lam_combined}
    \vspace{-1.9em}
\end{figure}

\section{Analysis} \label{sec:analysis}
\vspace{-0.5em}
In this section, we outline the proof of Theorem~\ref{thm:sph-MSE} by first establishing a connection between the mismatched matrix-MSE and the mismatched free energy, and subsequently computing the latter via the asymptotic behavior of $k$-dimensional spherical integrals. 

\vspace{-0.5em}
\subsection{Mismatched free energy}
The \emph{mismatched free energy} is defined as 
\begin{equation}\label{def: fn}
    f_n(\bm{\alpha}, \bm{\beta}, \lambda_*, \lambda) = -\frac{1}{n} \mathbb{E}_{{\bS, \bW}} [\ln Z_n(\bY)],
\end{equation}
where $Z_n(\bY)$ is given in \eqref{eq:partition}. For simplicity, we omit the explicit dependence on $\bm{\alpha}, \bm{\beta}, \lambda_*, \lambda$ but note that both $f_n$ and $\mathrm{MSE}_n$ remain functions of these parameters.  We now present the following lemma, providing a key connection between the mismatched matrix-MSE and the mismatched free energy.
\begin{lemma}[Generalized I-MMSE formula]\label{lem:gen_IMMSE}
   \begin{equation} \label{eq:gen_IMMSE}
       \frac{\partial f_n}{\partial \lambda}  + \left( 2 - \sqrt{\frac{\lambda_*}{\lambda}} \right) \sqrt{\frac{\lambda_*}{\lambda}} \frac{\partial f_n}{\partial \lambda_*}  + \frac{1}{4} \sum_{i=1}^r \alpha_i^2 = \frac{1}{4} \mathrm{MSE}_n.
   \end{equation}
\end{lemma}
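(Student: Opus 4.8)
The plan is to differentiate the mismatched free energy $f_n = -\frac1n\EE[\ln Z_n(\bY)]$ with respect to the two SNR parameters $\lambda$ and $\lambda_*$, and then combine the two derivatives so that the cross-terms involving the Bayes estimator $M_{\mathrm{mis}}(\bY)$ organize into $\mathrm{MSE}_n = \frac1{n^2}\EE\|\bS_{\balpha}\bS_{\balpha}^\top - M_{\mathrm{mis}}(\bY)\|_F^2$. First I would compute $\partial f_n/\partial\lambda$. Since $\lambda$ enters only through the exponent $-\frac14\|\sqrt{\lambda/n}\,\bX_{\bbeta}\bX_{\bbeta}^\top - \bY\|_F^2$ inside $Z_n$, the derivative brings down a posterior average of $-\frac{\partial}{\partial\lambda}\big(\frac14\|\sqrt{\lambda/n}\,\bX_{\bbeta}\bX_{\bbeta}^\top-\bY\|_F^2\big)$; expanding the square and using $\frac{\partial}{\partial\lambda}\sqrt{\lambda/n}=\frac1{2\sqrt{\lambda n}}$ yields terms of the form $\frac1{n}\langle \bX_{\bbeta}\bX_{\bbeta}^\top,\bY\rangle$ and $\frac{\sqrt\lambda}{n^{3/2}}\|\bX_{\bbeta}\bX_{\bbeta}^\top\|_F^2$, averaged over the mismatched posterior.

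Next I would compute $\partial f_n/\partial\lambda_*$. Here $\lambda_*$ enters only through $\bY = \sqrt{\lambda_*/n}\sum_i\alpha_i\bs_i\bs_i^\top + \bW$, so I would use the chain rule together with a Gaussian integration-by-parts (Stein's lemma) on the $\bW$-expectation to handle the pieces where the $\bY$-dependence is through the noise. The key bookkeeping step is that differentiating $\ln Z_n(\bY)$ in $\bY$ produces exactly the posterior mean $\frac{1}{Z_n}\int \bX_{\bbeta}\bX_{\bbeta}^\top\cdots = M_{\mathrm{mis}}(\bY)$ up to the factor $\frac{\sqrt\lambda}{2n}$ coming from the exponent. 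Combining the explicit $\sqrt{\lambda_*/n}\sum\alpha_i\bs_i\bs_i^\top$ contribution with the Stein-lemma contribution from $\bW$, the quantity $\partial f_n/\partial\lambda_*$ should come out as a specific linear combination of $\EE\langle \bS_{\balpha}\bS_{\balpha}^\top, M_{\mathrm{mis}}(\bY)\rangle/n^2$ and $\EE\|M_{\mathrm{mis}}(\bY)\|_F^2/n^2$-type overlaps (the latter arising because Stein on $\bW$ reaches back into a second posterior replica, producing the second moment of the posterior mean).

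The final step is algebraic: form $\partial f_n/\partial\lambda + (2-\sqrt{\lambda_*/\lambda})\sqrt{\lambda_*/\lambda}\,\partial f_n/\partial\lambda_*$, and check that the coefficients of the three building blocks — $\EE\|\bS_{\balpha}\bS_{\balpha}^\top\|_F^2/n^2 = \sum_i\alpha_i^2$, $\EE\langle \bS_{\balpha}\bS_{\balpha}^\top, M_{\mathrm{mis}}\rangle/n^2$, and $\EE\|M_{\mathrm{mis}}\|_F^2/n^2$ — line up so that after adding $\frac14\sum\alpha_i^2$ the right-hand side is precisely $\frac14\EE\|\bS_{\balpha}\bS_{\balpha}^\top - M_{\mathrm{mis}}\|_F^2/n^2$. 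The specific weight $(2-\sqrt{\lambda_*/\lambda})\sqrt{\lambda_*/\lambda}$ is presumably dictated exactly by this cancellation requirement, so the main obstacle — and the part requiring care — is the Stein/Nishimori bookkeeping in the $\lambda_*$-derivative: in the mismatched setting there is no Nishimori identity to collapse replicas, so one must track the $\|M_{\mathrm{mis}}\|_F^2$ (two-replica) term honestly and verify it receives the right total coefficient. I would also need to justify interchanging $\partial_\lambda,\partial_{\lambda_*}$ with the expectations (dominated convergence, using that $Z_n$ is bounded below away from $0$ in a suitable sense and the integrands have Gaussian tails), but this is routine given Assumption~\ref{ass:low-rank}.
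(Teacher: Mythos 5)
Your overall architecture matches the paper's: differentiate $f_n$ in both $\lambda$ and $\lambda_*$, use Gaussian integration by parts (Stein's lemma) on the noise, and check that the prescribed linear combination reassembles $\frac{1}{4n^2}\EE_{\bS,\bW}\|\bS_{\balpha}\bS_{\balpha}^\top-\langle\bX_{\bbeta}\bX_{\bbeta}^\top\rangle\|_F^2$. However, you have placed the Stein step in the wrong derivative, and as written the bookkeeping would not close. Since $\lambda_*$ enters $\bY$ only through the signal part $\sqrt{\lambda_*/n}\,\bS_{\balpha}\bS_{\balpha}^\top$ (the noise $\bW$ does not depend on $\lambda_*$), the chain rule gives simply $\frac{\partial f_n}{\partial\lambda_*}=-\frac{1}{4n^2}\sqrt{\lambda/\lambda_*}\,\EE_{\bS,\bW}\bigl[\langle\tr(\bS_{\balpha}\bS_{\balpha}^\top\bX_{\bbeta}\bX_{\bbeta}^\top)\rangle\bigr]$: a pure signal--estimator overlap, with no $\bW$-dependence left to integrate by parts and hence no two-replica term. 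Your claim that ``Stein on $\bW$ reaches back into a second posterior replica'' in the $\lambda_*$-derivative is therefore not tenable; if you tried to execute it you would find nothing to apply Stein to.

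The two-replica term $\EE\|\langle\bX_{\bbeta}\bX_{\bbeta}^\top\rangle\|_F^2$ must instead come out of the $\lambda$-derivative. After substituting $\bY=\sqrt{\lambda_*/n}\,\bS_{\balpha}\bS_{\balpha}^\top+\bW$ into the exponent, which then contains $\tfrac12\sqrt{\lambda/n}\,\tr(\bW\bX_{\bbeta}\bX_{\bbeta}^\top)$, the derivative $\partial f_n/\partial\lambda$ produces, besides the first-moment term $\frac{1}{4n^2}\EE[\langle\|\bX_{\bbeta}\bX_{\bbeta}^\top\|_F^2\rangle]$ and the overlap term, an extra term $-\frac{1}{4n\sqrt{n\lambda}}\EE[\langle\tr(\bW\bX_{\bbeta}\bX_{\bbeta}^\top)\rangle]$. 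Applying Stein's lemma entrywise to this term (keeping track of the variance $2$ on the diagonal) yields exactly $-\frac{1}{4n^2}\bigl(\EE[\langle\|\bX_{\bbeta}\bX_{\bbeta}^\top\|_F^2\rangle]-\EE\|\langle\bX_{\bbeta}\bX_{\bbeta}^\top\rangle\|_F^2\bigr)$, which cancels the first-moment term and leaves the squared posterior mean. With this correction the remainder of your plan --- matching the coefficients of $\EE\|\bS_{\balpha}\bS_{\balpha}^\top\|_F^2$, the overlap, and $\EE\|\langle\bX_{\bbeta}\bX_{\bbeta}^\top\rangle\|_F^2$, and checking that the weight $\bigl(2-\sqrt{\lambda_*/\lambda}\bigr)\sqrt{\lambda_*/\lambda}$ turns the total overlap coefficient into $-2$ --- goes through exactly as in the paper.
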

\begin{proof}
    The key step is to apply Stein's lemma to $\partial f_n / \partial \lambda$, which enables certain terms in $\partial f_n / \partial \lambda$ and $\partial f_n / \partial \lambda_*$ to cancel. The complete proof can be found in Appendix~\ref{app:proof_gen_IMMSE}.
\end{proof}
\vspace{-0.5em}
\begin{remark}
    This lemma generalizes \cite[Lemma 1]{pourkamali2022mismatched} from the rank-one setting to arbitrary ranks $r$ and $k$. Indeed, setting $r = k = 1$ recovers Lemma~1 of \cite{pourkamali2022mismatched}.
\end{remark}
We note that $\mathrm{MSE}_n$ depends only on the derivatives of $f_n$ with respect to $\lambda$ and $\lambda_*$. Consequently, constant factors in $Z_n(\bY)$  may be discarded without affecting these derivatives. Hence, we rewrite
\begin{equation*}
    Z_n(\bY) = \int  P(\bX)e^{-\frac{1}{4}\frac{\lambda}{n} \sum_{i=1}^k \beta_i^2+\frac{1}{2}\sqrt{\frac{\lambda}{n}}\tr(\bY \bX_{\bbeta}\bX_{\bbeta}^\top)} \mathrm{d} \bX.
\end{equation*}
Because $P(\bX)$ is invariant under orthogonal transformations $\bX \mapsto \boldsymbol{O} \bX$ with $\det(\boldsymbol{O}) = 1$, one can integrate over all such $\boldsymbol{O}$ and average accordingly. This yields
\begin{equation}\label{Haar_partition}
    Z_n(\bY) = e^{-\frac{1}{4} \frac{\lambda}{n}\sum_{i=1}^k \beta_i^2} \mathbb{E}  \int \mathcal{D}  \boldsymbol{O} e^{\frac{\sqrt{\lambda}}{2 \sqrt{n}} \operatorname{Tr} (\bY \boldsymbol{O} \bX_{\bm{\beta}} \bX_{\bm{\beta}}^{\top} \boldsymbol{O}^{\top})},
\end{equation}
where the expectation is with respect to the signal prior in the inference model, and $\mathcal{D} \boldsymbol{O}$ denotes integration over the Haar measure. The principal challenge thus lies in evaluating the Haar measure integral in \eqref{Haar_partition}, which we address in the following subsection.

\vspace{-1em}
\subsection{\texorpdfstring{Asymptotic sphere integrals with rank $k$}{Asymptotic sphere integrals with rank k}}
Evaluating \eqref{Haar_partition} requires a detailed understanding of the following sphere integration 
\begin{equation*}
    I_{n}(\bA, \bB) = \int \mathcal{D} \boldsymbol{O} e^{n \operatorname{Tr} \bA \boldsymbol{O} \bB \boldsymbol{O}^{\top}},
\end{equation*}
where $\bA, \bB \in \mathbb{R}^{n \times n}$ are real symmetric matrices. Such integrals are known in mathematical physics as \emph{Harish-Chandra–Itzykson–Zuber} (HCIZ) integrals and have been widely investigated \cite{harish1957differential}. Since $\mathcal{D} \boldsymbol{O}$ denotes integration over the Haar measure (i.e., the uniform distribution on the all orthogonal matrices, informally speaking), $I_n(\bA, \bB)$ depends only on $\bA$ and the nonzero eigenvalues of $\bB$. Specifically, let $\{\eta_i\}_{i=1}^m$ be the $m$ nonzero eigenvalues of $\bB$, arranged in non-increasing order. Denote by $\text{diag}_n(\{\eta_i\}_{i=1}^m)$ the diagonal matrix in $\RR^{n \times n}$ with these $m$ eigenvalues  on its main diagonal and zeros elsewhere. We obtain
\begin{equation}
    \begin{aligned}
        I_n(\bA, \{\eta_i\}_{i=1}^m) &: = \int \mathcal{D} \boldsymbol{O} e^{n \operatorname{Tr} \bA \boldsymbol{O} \text{diag}_n(\{\eta_i\}_{i=1}^m) \boldsymbol{O}^{\top}}, \\
        &= I_n(\bA, \bB).
    \end{aligned}\vspace{-0.2em}
\end{equation}
Using this notation, we may rewrite $Z_n(\bY)$ as 
\begin{equation*}
    Z_n(\bY) = e^{-\frac{1}{4} \frac{\lambda}{n}\sum_{i=1}^k \beta_i^2}  I_n \left( \bY / \sqrt{n},\{ \sqrt{\lambda} \beta_i / 2\}_{i=1}^k
    \right),
\end{equation*}
where the norm constraint and the orthogonality condition in the prior fix the eigenvalues of $\bX_{\bm{\beta}} \bX_{\bm{\beta}}^{\top}$, so the expectation in \eqref{Haar_partition} can be omitted.
In \cite{guionnet2005fourier}, the asymptotic behavior of sphere integrals is estimated for $m = 1$. Results for $m = O(1)$ with small $\eta_i$ have been established in \cite{guionnet2021asymptotics, collins2007new, potters2020first}, culminating in Theorem~\ref{thm:low_SNR_k_sphere}.
This theorem employs the Stieltjes transform of matrix $\bA$, for $z\in \mathrm{Supp}(\mu_{\bA})^c$, given by $H_{\mu_{\bA}}(z) = \int \frac{1}{z - \lambda} \mathrm{d} \mu_{\bA}(\lambda)$.
Due to space limitations, we put it in appendix~\ref{app_subsec:k_sphe_int_small}. In our setting, the matrix $\bY/ \sqrt{n}$ has a limiting eigenvalue distribution $\mu_{\text{SC}} = \frac{1}{2 \pi} \sqrt{4 - \lambda^2}  \mathrm{d} \lambda$ with $H_{\text{SC}}(z) = \frac{1}{2}(z - \sqrt{z^2 - 4})$~\cite{benaych2011eigenvalues}. Consequently, 
\vspace{-0.5em}
\begin{equation*}
    H_{\min} = -1, \quad H_{\max} = \min \{1, 1/ \sqrt{\lambda_*} \alpha_1 \}.
\end{equation*}
When $\sqrt{\lambda} \beta_1 \le \min \{1, 1/ \sqrt{\lambda_*} \alpha_1 \}$, 
the asymptotic evaluation of $f_n$ decomposes into $k$ rank-one  asymptotic sphere integrals, which can be addressed via the results of \cite{guionnet2005fourier}. However, for larger $\sqrt{\lambda} \beta_i$,  Theorem~\ref{thm:low_SNR_k_sphere} no longer applies. Fortunately, the following proposition from \cite{guionnet2021asymptotics} provides the relevant asymptotic behavior for rank $k$ with any $\sqrt{\lambda} \beta_i$.
\begin{theorem}[Prop. 1 in \cite{guionnet2021asymptotics}] \label{thm:high_SNR_sphere_int}
Suppose that $\bA$ admits a limiting eigenvalue distribution $\mu_A$ with $m$ largest outliers $\gamma_1 \geq \cdots \geq \gamma_m$. Let $\eta_1 \geq \cdots \geq \eta_m \geq 0$. Then 
\vspace{-0.5em}
\begin{equation*}
    \lim_{n \to \infty} \frac{1}{n}\ln I_n(\bA, \{\eta_i\}_{i=1}^m) = \frac{1}{2} \sum_{i=1}^m J(\mu_A, 2 \eta_i, \gamma_i),
\end{equation*}
\vspace{-0.5em}
where $J(\mu, \eta, \gamma) = K(\mu, \eta, \gamma, v(\mu, \eta, \gamma))$ with 
\begin{equation*}
    K(\mu,\eta,\gamma,v)=\eta\gamma+(v-\gamma)H_{\mu}(v)-\int\ln(\eta|v-x|)d\mu(x)-1,
\end{equation*}
and 
\begin{align*}
\vspace{-1.em}
    v(\mu,\eta,\gamma)=\begin{cases}
        \gamma &\text{if}\quad H_{\mu}(\gamma) \leq \eta,\\
        H_{\mu}^{-1}(\eta) &\text{if}\quad H_{\mu}(\gamma) > \eta.\\
    \end{cases}
\end{align*} 
\end{theorem}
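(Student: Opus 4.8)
The plan is to reduce this Harish-Chandra--Itzykson--Zuber integral to an expectation over a uniform orthonormal $m$-frame, to settle the rank-one case by a Fourier/Laplace computation, and then to build up to rank $m$ by peeling off one direction at a time. Writing $\boldsymbol{u}_1,\dots,\boldsymbol{u}_m$ for the first $m$ columns of the Haar matrix $\boldsymbol{O}$, we have $\operatorname{Tr}\bA\boldsymbol{O}\,\mathrm{diag}_n(\{\eta_i\}_{i=1}^m)\boldsymbol{O}^\top=\sum_{i=1}^m\eta_i\,\boldsymbol{u}_i^\top\bA\boldsymbol{u}_i$, with $(\boldsymbol{u}_1,\dots,\boldsymbol{u}_m)$ uniform on the Stiefel manifold of orthonormal $m$-frames, so that $I_n(\bA,\{\eta_i\}_{i=1}^m)=\EE\big[\exp(n\sum_{i=1}^m\eta_i\,\boldsymbol{u}_i^\top\bA\boldsymbol{u}_i)\big]$. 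After substituting $\boldsymbol{v}=\sqrt{n}\,\boldsymbol{u}$ the norm constraint becomes $\|\boldsymbol{v}\|^2=n$, which is the form amenable to a saddle-point treatment.

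For $m=1$, following the R-transform/Fourier approach of \cite{guionnet2005fourier}, I would diagonalize $\bA$, represent the sphere constraint $\delta(\|\boldsymbol{v}\|^2-n)$ as a Hubbard--Stratonovich contour integral, carry out the resulting Gaussian integral over $\boldsymbol{v}$, and normalize using Stirling for the surface area of the sphere. This reduces $\tfrac1n\ln I_n(\bA,\eta)$ to a one-dimensional Laplace/saddle-point problem whose saddle sits at $v=H_{\mu_A}^{-1}(2\eta)$ as long as this point lies to the right of the top eigenvalue $\gamma$ of $\bA$, i.e.\ while $2\eta<H_{\mu_A}(\gamma)$; once $2\eta\ge H_{\mu_A}(\gamma)$ the domain of convergence of the Gaussian integral pins the saddle at $v=\gamma$, which is exactly the BBP-type transition encoded in the two cases of $v(\mu,\eta,\gamma)$. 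In both regimes a direct computation matches the saddle value with $\tfrac12 K\big(\mu_A,2\eta,\gamma,v(\mu_A,2\eta,\gamma)\big)=\tfrac12 J(\mu_A,2\eta,\gamma)$, the constant $-1$ inside $K$ being what survives after pairing the Gaussian normalization $\pi^{n/2}$ against the sphere's surface area.

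For general $m=O(1)$, I would induct on $m$ by conditioning on $\boldsymbol{u}_1$: given $\boldsymbol{u}_1$, the frame $\boldsymbol{u}_2,\dots,\boldsymbol{u}_m$ is uniform orthonormal in $\boldsymbol{u}_1^\perp$, so $I_n(\bA,\{\eta_i\}_{i=1}^m)=\EE_{\boldsymbol{u}_1}\big[e^{n\eta_1\boldsymbol{u}_1^\top\bA\boldsymbol{u}_1}\,I_{n-1}(\bA_{\boldsymbol{u}_1},\{\eta_i\}_{i=2}^m)\big]$ with $\bA_{\boldsymbol{u}_1}$ the compression of $\bA$ to $\boldsymbol{u}_1^\perp$. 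Because this compression is a rank-one perturbation of a restriction of $\bA$, its limiting spectral law is still $\mu_A$, and its leading outliers are $\gamma_2\ge\gamma_3\ge\cdots$ precisely when the optimal $\boldsymbol{u}_1$ locks onto the top eigenvector $\boldsymbol{\psi}_1$ (which happens iff $2\eta_1\ge H_{\mu_A}(\gamma_1)$) and remain $\gamma_1\ge\gamma_2\ge\cdots$ when the optimal $\boldsymbol{u}_1$ stays delocalized. Combining the rank-one estimate for the $\boldsymbol{u}_1$-integral with the inductive hypothesis for $I_{n-1}$ and optimizing over the configuration of $\boldsymbol{u}_1$ yields $\tfrac12 J(\mu_A,2\eta_1,\gamma_1)+\tfrac12\sum_{i=2}^m J(\mu_A,2\eta_i,\gamma_i)$; the outlier bookkeeping closes because in the delocalized case each remaining term satisfies $2\eta_i\le 2\eta_1<H_{\mu_A}(\gamma_1)\le H_{\mu_A}(\gamma_i)$, which places it in the free regime $v=H_{\mu_A}^{-1}(2\eta_i)$ whose value does not depend on which outlier is assigned to it.

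\textbf{Main obstacle.} The hard part is the rigour of the Laplace analysis rather than its combinatorics. In the rank-one step one must justify deforming the complex contour onto a genuine steepest-descent path and, above all, handle the pinned regime $2\eta\ge H_{\mu_A}(\gamma)$, where the Gaussian integral degenerates at the boundary of its convergence domain: the clean fix is to split off the overlap $t=\langle\boldsymbol{u},\boldsymbol{\psi}_1\rangle^2$, which is Beta$(\tfrac12,\tfrac{n-1}{2})$-distributed, apply the non-degenerate (no-outlier) estimate to the delocalized remainder in $\boldsymbol{\psi}_1^\perp$, and then do a separate one-dimensional Laplace integral over $t$. For general $m$ the obstacle becomes quantitative: matching large-deviation upper and lower bounds in the conditioning argument requires uniform-in-$\boldsymbol{u}_1$ and perturbatively stable versions of the rank-$(m-1)$ asymptotics, together with continuity of the outlier locations of $\bA_{\boldsymbol{u}_1}$ in the configuration of $\boldsymbol{u}_1$, and these must be controlled uniformly across the transitions $2\eta_i=H_{\mu_A}(\gamma_i)$, which is where the bulk of the technical effort lies.
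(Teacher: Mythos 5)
This statement is not proved in the paper at all: it is imported verbatim as Proposition~1 of \cite{guionnet2021asymptotics} and used as a black box, so there is no internal proof to compare your argument against. Judged on its own terms, your outline does follow the route by which such results are actually established in the literature — reduce the Haar integral to an expectation over a uniform orthonormal $m$-frame, prove the rank-one case by the Guionnet--Ma\"ida Fourier/Laplace analysis with the BBP-type dichotomy between the free saddle $v=H_{\mu_{\bA}}^{-1}(2\eta)$ and the pinned saddle $v=\gamma$, and then pass to rank $m$ recursively. The formula bookkeeping is consistent with the statement (your transition at $2\eta = H_{\mu_{\bA}}(\gamma)$ matches the theorem's case split once the factor of $2$ in $J(\mu_A,2\eta_i,\gamma_i)$ is accounted for).

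However, as written this is a plan rather than a proof, and the steps you defer are precisely where the content lies. Two points deserve to be called out as genuine gaps rather than routine technicalities. First, in the inductive step the dichotomy ``$\boldsymbol{u}_1$ locks onto $\boldsymbol{\psi}_1$'' versus ``$\boldsymbol{u}_1$ stays delocalized'' is not exhaustive at the exponential scale: the overlap $t=\langle\boldsymbol{u}_1,\boldsymbol{\psi}_1\rangle^2$ can take any value in $[0,1]$ with probability $e^{-n I(t)}$, and the correct argument establishes a large deviation principle for $t$ (jointly with the overlaps on the other outlier eigenvectors) and \emph{derives} the two cases of $v(\mu,\eta,\gamma)$ as the location of the optimizer of a variational problem over $t$, rather than assuming the optimizer is at $t=1$ or $t=0$. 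You gesture at this in your obstacle paragraph for $m=1$, but the same issue propagates into the rank-$m$ recursion. Second, the upper bound in the conditioning step requires applying the rank-$(m-1)$ asymptotics to $\bA_{\boldsymbol{u}_1}$ \emph{uniformly} over the conditioning on $\boldsymbol{u}_1$, including configurations where the outliers of the compression move continuously with $\boldsymbol{u}_1$; a pointwise inductive hypothesis does not suffice, so the induction must be set up with a quantitative, perturbation-stable statement from the start. Neither gap is fatal to the strategy — this is how \cite{guionnet2021asymptotics} and \cite{guionnet2005fourier} proceed — but until the overlap LDP and the uniform induction hypothesis are supplied, the argument does not yet establish the theorem.
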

Using Theorem~\ref{thm:high_SNR_sphere_int} to compute the asymptotic expression of $f_n$, we set $\eta_i = \sqrt{\lambda} \beta_i / 2$ and $\bA = \bY / \sqrt{n}$. If $k$ is smaller than the number of outliers of $\bY / \sqrt{n}$, then $\eta_i$ may be assigned zero for $k<i \leq m$. Conversely, if $k$ exceeds the number of outliers, the set of outliers can be augmented by introducing $k-m$ additional signal components with power $\alpha_i = (1+\epsilon) / \sqrt{\lambda_*}$ for $m<i \leq k $,  and taking the limit $\epsilon \to 0$ in the final expression.
\vspace{-0.5em}

\vspace{-1.2em}
\subsection{Asymptotic of mismatched free energy and MSE}
\vspace{-0.8em}
Using Theorems~\ref{thm:low_SNR_k_sphere} and \ref{thm:high_SNR_sphere_int}, we now derive the asymptotic expression for the mismatched free energy.
\begin{theorem} \label{thm:free_enery_sph}
     Under the same conditions as Theorem~\ref{thm:sph-MSE}, the asymptotic mismatched free energy is given by
     \vspace{-0.5em}
     \begin{equation}\label{eq:sphere_f}
     \lim_{n \to \infty} f_n =\sum_{i=1}^ch_1(\lambda_{*},\lambda,\alpha_{i},\beta_{i})+\sum_{i=d+1}^eh_2(\lambda,\beta_{i}),
\end{equation}
where 
\vspace{-0.7em}
\begin{align*}
    h_1(\lambda_{*},\lambda,\alpha_{i},\beta_{i})&=\frac{\lambda_{}}{4}\beta_i^2-\frac{\sqrt{\lambda}}{2}\beta_i(\sqrt{\lambda_{*}} \alpha_i+\frac{1}{\sqrt{\lambda_{*}} \alpha_i})\\
    &\quad+\frac{1}{4\lambda_{*}\alpha_i^2}+\frac{1}{2} \ln(\sqrt{\lambda_{*}\lambda}\beta_i\alpha_i)+\frac{1}{2},
\end{align*}
and 
\vspace{-0.7em}
\begin{equation*}
    h_2(\lambda,\beta_{i})=\frac{\lambda}{4}\beta_i^2-\sqrt{\lambda}\beta_i
 +\frac{1}{2} \ln(\sqrt{\lambda}\beta_i)+\frac{3}{4}.
\end{equation*}
\begin{proof}
    See Appendix~\ref{app: fn}.
\end{proof}
\end{theorem}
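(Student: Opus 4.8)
The plan is to substitute the spherical-integral (HCIZ) representation of $Z_n(\bY)$ into the definition of $f_n$ and then evaluate the resulting variational formula in closed form using the known spectrum of $\bY/\sqrt n$. Starting from the representation $Z_n(\bY)=e^{-\frac14\frac{\lambda}{n}\|\bX_{\bm{\beta}}\bX_{\bm{\beta}}^\top\|_F^2}\,I_n(\bY/\sqrt n,\{\sqrt\lambda\beta_i/2\}_{i=1}^k)$ set up in the previous subsection together with the identity $\|\bX_{\bm{\beta}}\bX_{\bm{\beta}}^\top\|_F^2=n^2\sum_{i=1}^k\beta_i^2$ on the support of $P$, one gets $f_n=\frac\lambda4\sum_{i=1}^k\beta_i^2-\frac1n\EE_{\bS,\bW}\big[\ln I_n(\bY/\sqrt n,\{\sqrt\lambda\beta_i/2\}_{i=1}^k)\big]$, so it suffices to show that $\frac1n\EE_{\bS,\bW}[\ln I_n(\cdot)]\to\frac12\sum_i J(\mu_{\text{SC}},\sqrt\lambda\beta_i,\gamma_i)$, with the $\gamma_i$ to be identified, and then to compute the right-hand side. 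Almost-sure convergence of $\frac1n\ln I_n$ is exactly Theorem~\ref{thm:high_SNR_sphere_int} (and, in the regime $\sqrt\lambda\beta_1\le\min\{1,1/(\sqrt{\lambda_*}\alpha_1)\}$, Theorem~\ref{thm:low_SNR_k_sphere}, which there splits the integral into $k$ rank-one pieces); to upgrade this to convergence of the expectation I would use the elementary two-sided bound $\frac1n|\ln I_n(\bA,\{\eta_i\})|\le\|\bA\|_{\mathrm{op}}\sum_i\eta_i$ (upper via $\operatorname{Tr}(\bA O D O^\top)\le\|\bA\|_{\mathrm{op}}\|D\|_*$, lower via Jensen) together with the sub-Gaussian concentration of $\|\bY/\sqrt n\|_{\mathrm{op}}$, giving uniform integrability.

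The second ingredient is the spectrum of $\bY/\sqrt n=\bW/\sqrt n+\frac{\sqrt{\lambda_*}}n\sum_{i=1}^r\alpha_i\bs_i\bs_i^\top$. Here $\bW/\sqrt n$ has limiting spectral distribution $\mu_{\text{SC}}$ on $[-2,2]$, and the rank-$r$ perturbation has eigenvalues $\sqrt{\lambda_*}\alpha_1\ge\cdots\ge\sqrt{\lambda_*}\alpha_r$ since $\|\bs_i\|_2^2=n$ and the $\bs_i$ are orthogonal; by \cite{benaych2011eigenvalues}, $\bY/\sqrt n$ then has exactly $d$ outliers, converging to $\gamma_i=\sqrt{\lambda_*}\alpha_i+1/(\sqrt{\lambda_*}\alpha_i)$ for $1\le i\le d$, while the remaining $r-d$ directions produce no outlier. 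I would also record the two facts used below: $H_{\text{SC}}(\gamma_i)=1/(\sqrt{\lambda_*}\alpha_i)$, and the logarithmic potential $\int\ln(z-x)\,d\mu_{\text{SC}}(x)=\frac{z^2}4-\frac{z\sqrt{z^2-4}}4+\ln\frac{z+\sqrt{z^2-4}}2-\frac12$ for $z>2$, which at $z=\theta+1/\theta$ (with $\theta\ge1$) collapses to $\frac1{2\theta^2}+\ln\theta$; in particular $\int\ln(\gamma_i-x)\,d\mu_{\text{SC}}(x)=\frac1{2\lambda_*\alpha_i^2}+\ln(\sqrt{\lambda_*}\alpha_i)$ and $\int\ln(2-x)\,d\mu_{\text{SC}}(x)=\frac12$.

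The core computation is a case analysis of $J(\mu_{\text{SC}},b_i,\gamma_i)$ with $b_i:=\sqrt\lambda\beta_i$, using the expression for $v(\mu,\eta,\gamma)$ from Theorem~\ref{thm:high_SNR_sphere_int}. (i) For $1\le i\le c$ one has $H_{\text{SC}}(\gamma_i)=1/(\sqrt{\lambda_*}\alpha_i)<b_i$, so $v=\gamma_i$ and $J=b_i\gamma_i-\ln b_i-\int\ln(\gamma_i-x)\,d\mu_{\text{SC}}(x)-1$; substituting the potential value and forming $\frac\lambda4\beta_i^2-\frac12 J$ reproduces $h_1(\lambda_*,\lambda,\alpha_i,\beta_i)$ after elementary simplification. (ii) For $c<i\le\min\{k,d\}$ one has $b_i<1$ and $H_{\text{SC}}(\gamma_i)\ge b_i$, so $v=H_{\text{SC}}^{-1}(b_i)=b_i+1/b_i$ with $H_{\text{SC}}(v)=b_i$; the terms telescope to $J=b_i^2/2$, whence $\frac\lambda4\beta_i^2-\frac12 J=0$. (iii) If $k<d$, pad with $\eta_i=0$ for $k<i\le d$; since $J(\mu_{\text{SC}},0,\gamma_i)=\lim_{\eta\downarrow0}J(\mu_{\text{SC}},\eta,\gamma_i)=0$, these contribute nothing, and as $e=d$ here formula \eqref{eq:sphere_f} follows. (iv) If $k>d$, use the augmentation device from the text: add $k-d$ fictitious perturbation directions of strength $(1+\epsilon)/\sqrt{\lambda_*}$, giving outliers $\gamma_i^{(\epsilon)}\to2$ with $H_{\text{SC}}(\gamma_i^{(\epsilon)})=1/(1+\epsilon)$. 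For $d<i\le e$ we have $b_i>1$, so eventually $v=\gamma_i^{(\epsilon)}$; letting $\epsilon\to0$ gives $J\to2b_i-\ln b_i-\frac32$, hence $\frac\lambda4\beta_i^2-\frac12 J\to h_2(\lambda,\beta_i)$. For $e<i\le k$ we have $b_i\le1$, so again $v=H_{\text{SC}}^{-1}(b_i)$, $J\to b_i^2/2$, and the contribution is $0$. Summing the surviving $h_1$ contributions over $1\le i\le c$ and the surviving $h_2$ contributions over $d<i\le e$ gives \eqref{eq:sphere_f}.

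The main obstacles are of a ``technical justification'' rather than a computational nature. First, Theorems~\ref{thm:low_SNR_k_sphere}--\ref{thm:high_SNR_sphere_int} are almost-sure statements phrased for matrices with a prescribed limiting bulk and prescribed outliers, so one must verify that $\bY/\sqrt n$ lies in their scope (its bulk converges a.s.\ to $\mu_{\text{SC}}$, its outliers converge a.s.\ to the $\gamma_i$, and $J$ depends continuously on this data) and then move from the a.s.\ limit of $\frac1n\ln I_n$ to the limit of its expectation $f_n$ via the uniform-integrability estimate above. Second, the overparameterized regime $k>d$ forces the augmentation-and-limit argument, which requires checking that $J(\mu_{\text{SC}},\eta,\gamma)$ is continuous in $\gamma$ up to and including the spectral edge $\gamma=2$ --- equivalently, that the HCIZ asymptotics are stable under an $o(1)$ relocation of outliers toward the edge --- so that the $\epsilon\to0$ limit of the variational formula computed with fictitious outliers equals the true value. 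Everything else (evaluating the logarithmic potential, the case-by-case algebra leading to $h_1$ and $h_2$, and checking the boundary cases where an inequality among $\sqrt{\lambda_*}\alpha_i,\sqrt\lambda\beta_i,1$ is tight) is routine.
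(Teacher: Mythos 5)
Your proposal is correct and follows essentially the same route as the paper's own proof in Appendix~\ref{app: fn}: reduce $f_n$ to the rank-$k$ HCIZ integral against $\bY/\sqrt n$, use the Benaych-Georges--Nadakuditi outlier locations $\gamma_i=\sqrt{\lambda_*}\alpha_i+1/(\sqrt{\lambda_*}\alpha_i)$, evaluate $J(\mu_{\mathrm{SC}},\sqrt\lambda\beta_i,\gamma_i)$ case by case according to whether $i\le c$, $c<i\le d$, $d<i\le e$, or $i>e$, and handle $k>d$ via the same $\epsilon$-augmentation of fictitious outliers at the edge. The only difference is that you spell out two technical points the paper leaves implicit (uniform integrability to pass from a.s.\ convergence of $\tfrac1n\ln I_n$ to convergence of its expectation, and continuity of $J$ in $\gamma$ at the spectral edge), which is a welcome but non-essential addition.
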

\vspace{-1.2em}
By substituting the asymptotic mismatched free energy \eqref{eq:sphere_f} into \eqref{eq:gen_IMMSE} and taking derivatives with respect to $\lambda$ and $\lambda_*$, one obtains the asymptotic mismatched matrix-MSE given in \eqref{eq:sph-MSE}.
\vspace{-1.5em}
\subsection{Generalization to Gaussian signal}
\vspace{-0.7em}
In fact, we can compute the asymptotic mismatched matrix-MSE for Gaussian signal (still preserving them orthogonal each other), we have the following theorem 
\begin{theorem} \label{thm:Gau-MSE}
    Suppose $\bS$ is generated with the following prior
    \begin{equation} \label{eq:gau_prior}
        P(\bS) \propto \exp \left \{ -\frac{1}{2} \|\bS\|_F^2\right \} \prod_{i<j} \delta \left( \frac{1}{n} \langle \bs_i, \bs_j\rangle\right),
    \end{equation}
    and with the same conditions in Theorem~\ref{thm:sph-MSE}, then the sequence $(\mathrm{MSE}^{\mathrm{Gau}}_n(\bm{\alpha}, \bm{\beta}, \lambda_*, \lambda))_{n \geq 1}$ converges to 
    \vspace{-1em}
    \begin{equation}\label{eq:gau-MSE}
        \begin{aligned}
            &\lim_{n \to \infty} \mathrm{MSE}^{\mathrm{Gau}}_n(\bm{\alpha}, \bm{\beta}, \lambda_*, \lambda) = \underbrace{\sum_{i=1}^c g^{\mathrm{Gau}}(\alpha_i, \beta_i, \lambda_*, \lambda)}_{\text{inference term}} \\     \vspace{-2em} &\quad \quad \quad
            + \underbrace{\sum_{i=d+1}^e(\frac{1}{\sqrt{\lambda}}-\frac{1}{\lambda\beta_{i}})^2}_{\text{overfitting term}} + \underbrace{\sum_{i=1}^{r} \alpha_i^2}_{\text{constant term}},
        \end{aligned}
        \vspace{-0.5em}
    \end{equation}
    where $g^{\mathrm{Gau}}(\alpha, \beta, \lambda, \lambda_*)$ is given by 
    \vspace{-0.5em}
    \begin{align*}
    g^{\mathrm{Gau}}(\alpha, \beta, \lambda_*, \lambda) &=\alpha^{2}\left(-2\sqrt{\frac{\lambda_*}{\lambda}}+\frac{\lambda_*}{\lambda}\right)+\frac{2}{\sqrt{\lambda_*\lambda}}+\frac{1}{\lambda^{ 2}\beta^2} \vspace{-1em} \\
    &\quad +\frac{2\alpha}{\lambda\beta}\left(1-\sqrt{\frac{\lambda_*}{\lambda}}\right)-\frac{2}{\lambda_*\lambda\beta\alpha}.
    \end{align*}
    \vspace{-1.5em}
\end{theorem}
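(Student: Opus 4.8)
The proof follows the same three stages as for Theorem~\ref{thm:sph-MSE}. First, I claim that the generalized I-MMSE identity of Lemma~\ref{lem:gen_IMMSE} holds verbatim under the Gaussian prior \eqref{eq:gau_prior}. Its proof applies Stein's lemma to $\partial f_n/\partial\lambda$ with respect to the GOE noise $\bW$, an operation insensitive to the law of $\bS$; the only place the true prior enters is the identity $n^{-2}\EE[\|\bS_{\bm\alpha}\bS_{\bm\alpha}^\top\|_F^2]=\sum_i\alpha_i^2\,n^{-2}\EE[\|\bs_i\|^4]\to\sum_i\alpha_i^2$, which remains valid because $n^{-1}\|\bs_i\|^2\to 1$ in $L^2$ for the Gaussian-orthogonal prior. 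Hence \eqref{eq:gen_IMMSE} is still available, and it suffices to compute $\lim_n f_n$ for the Gaussian model and then differentiate in $\lambda$ and $\lambda_*$.

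The heart of the argument is the free-energy computation, which I would reduce to the spherical result of Theorem~\ref{thm:free_enery_sph}. Decompose each inference vector as $\bx_i=\rho_i\bm{v}_i$ with $\|\bm{v}_i\|=1$; under the Gaussian-orthogonal prior, $(\bm{v}_i)_{i\le k}$ is a uniformly-random orthonormal $k$-frame, independent of the radii, and $s_i:=\rho_i^2/n$ are asymptotically i.i.d.\ with $s_i\to 1$ and large-deviation rate function $\Lambda(s)=\tfrac12(s-1-\ln s)$ (from $\rho_i^2\sim\chi^2$). Since $\bX_{\bm\beta}\bX_{\bm\beta}^\top=\sum_i\beta_i\rho_i^2\,\bm{v}_i\bm{v}_i^\top$ has nonzero eigenvalues $n\beta_i s_i$, i.e.\ exactly the spectrum of the spherical model with inference powers $(\beta_i s_i)_{i\le k}$, and $Z_n(\bY)$ depends on $\bX$ only through this spectrum (after the Haar averaging in \eqref{Haar_partition}), one has $Z_n^{\mathrm{Gau}}(\bY)=\EE_{\bm{s}}\!\big[\tilde Z_n(\bY;(\beta_i s_i)_i)\big]$, where $\tilde Z_n(\bY;\bm\gamma)$ denotes the spherical partition function with powers $\bm\gamma$. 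Combining $-n^{-1}\ln\tilde Z_n(\bY;\bm\gamma)\to\lim_n f_n^{\mathrm{Sph}}(\bm\alpha,\bm\gamma,\lambda_*,\lambda)$ from Theorem~\ref{thm:free_enery_sph} (and noting that the outliers of $\bY/\sqrt n$ fluctuate only by $O(n^{-1/2})$, hence do not affect the limit) with Varadhan's lemma yields
\begin{equation*}
    \lim_{n\to\infty} f_n^{\mathrm{Gau}}=\inf_{\bm{s}\in(\RR^{+})^{k}}\Big\{\lim_{n\to\infty}f_n^{\mathrm{Sph}}\big(\bm\alpha,(\beta_i s_i)_i,\lambda_*,\lambda\big)+\sum_{i=1}^k\Lambda(s_i)\Big\}.
\end{equation*}

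By Assumption~\ref{ass:power} this optimization decouples into $k$ scalar problems, the $i$-th being $\min_{s>0}\{\phi_i(\beta_i s)+\Lambda(s)\}$, where $\phi_i(\gamma)$ equals $h_1(\lambda_*,\lambda,\alpha_i,\gamma)$ when the $i$-th component lies in the inference regime, $h_2(\lambda,\gamma)$ when it lies in the overfitting regime, and $0$ otherwise. Two structural facts make this explicit and preserve the three-rank decomposition. The $\ln(\cdot)$ terms of $h_1,h_2$ contribute exactly $+\tfrac12\ln s$ after the substitution $\gamma=\beta_i s$, cancelling the $-\tfrac12\ln s$ in $\Lambda(s)$, so within an active regime the objective is a genuine quadratic in $s$ with leading coefficient $\tfrac{\lambda}{4}\beta_i^2>0$; it has a unique minimizer $s_i^\ast$ in closed form, and substituting back yields modified profiles $\tilde h_1(\lambda_*,\lambda,\alpha_i,\beta_i)$, $\tilde h_2(\lambda,\beta_i)$. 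Moreover $h_1$ and $h_2$ vanish together with their first derivatives at their thresholds ($\sqrt{\lambda_*\lambda}\,\alpha_i\beta_i=1$, respectively $\sqrt{\lambda}\,\beta_i=1$), which forces $s_i^\ast$ to lie on the same side of the threshold as $s=1$; consequently an index is active exactly when $\sqrt{\lambda_*\lambda}\,\alpha_i\beta_i>1$ (respectively $\sqrt{\lambda}\,\beta_i>1$), so the effective, inference, and overfitting ranks $d,c,e$ coincide with those of Theorem~\ref{thm:sph-MSE}, and $\lim_n f_n^{\mathrm{Gau}}=\sum_{i=1}^c\tilde h_1(\lambda_*,\lambda,\alpha_i,\beta_i)+\sum_{i=d+1}^e\tilde h_2(\lambda,\beta_i)$.

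Substituting this into \eqref{eq:gen_IMMSE} and differentiating finishes the proof; by the envelope theorem the derivatives in $\lambda$ and $\lambda_*$ may be taken at $\bm{s}=\bm{s}^\ast$ fixed, so only a finite list of elementary derivatives is needed, and after simplification they produce \eqref{eq:gau-MSE} with $g^{\mathrm{Gau}}$ as stated. I expect the main obstacle to be making the free-energy step fully rigorous: one must justify pushing the logarithm through the radial expectation in $f_n=-n^{-1}\EE\ln Z_n^{\mathrm{Gau}}$, which needs concentration of $\ln\tilde Z_n(\bY;\bm\gamma)$ over $\bW$ that is uniform in $\bm\gamma$ on compact sets, together with a uniform-in-$\bm\gamma$ version of Theorem~\ref{thm:free_enery_sph} that remains valid across the phase-transition loci where the limiting spherical free energy is continuous but only piecewise smooth. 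The Gaussian concentration of $\ln Z_n$ in $\bW$ already exploited in the spherical analysis should supply the former, and the explicit, continuous form of the limit the latter. A secondary, purely bookkeeping issue is that re-sorting the perturbed powers $(\beta_i s_i)$ must not disturb the pairing with the outliers of $\bY/\sqrt n$; this is harmless since the $\beta_i$ are ordered and the minimizers $s_i^\ast$ are $O(1)$ perturbations that do not flip the order generically.
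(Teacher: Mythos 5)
Your proposal is correct and follows essentially the same route as the paper, whose proof of Theorem~\ref{thm:Gau-MSE} is only a three-sentence sketch: reduce to the spherical sphere integral evaluated at the random eigenvalues $\tfrac{\sqrt{\lambda}\beta_i}{n}\|\bx_i\|_2^2$ and ``apply the Laplace method'' over the radial variables, which is exactly your Varadhan step with rate function $\Lambda(s)=\tfrac12(s-1-\ln s)$. Your write-up in fact supplies substantially more detail than the paper (the log-cancellation making each scalar problem quadratic, the $C^1$ matching at the thresholds that keeps the ranks $c,d,e$ unchanged, and the envelope-theorem differentiation), and these steps check out numerically against the stated $g^{\mathrm{Gau}}$ and overfitting term.
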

\begin{proof}
    The proof largely follows the reasoning used for the spherical signal. The main distinction is that the eigenvalues and the Frobenius norm of $\bX_{\bm{\beta}} \bX_{\bm{\beta}}^{\top}$ are not constants. Consequently, one must evaluate $I_n (\frac{\bY}{\sqrt{n}},\{\frac{\sqrt{\lambda} \beta_i}{2 n} \|\bx_i\|_2^2\}_{i=1}^k)$ and apply the Laplace method to derive the asymptotic of $f_n$. 
\end{proof}
\vspace{-0.8em}
\begin{remark}
    In the rank-one case ($r=k=1$), \eqref{eq:gau-MSE} reduces to
    \vspace{-0.5em}
    \begin{equation*}
        \lim_{n \to \infty} \mathrm{MSE}^{\mathrm{Gau}}_n= \begin{cases}
            (\frac{1}{\sqrt{\lambda}}-\frac{1}{\lambda\beta})^2+ \alpha^2 &\text{if cond1},\\
            g^{\mathrm{Gau}}(\alpha, \beta, \lambda_*, \lambda) +\alpha^2 &\text{if cond2},\\
            \alpha^2 &\text{if}\ o.w,
        \end{cases}
        \vspace{-0.3em}
    \end{equation*}
    where cond1 is defined as $\lambda_*\alpha^2\leq1\ \text{and}\ \lambda\beta^2>1$; cond2 is defined as $\lambda_*\alpha^2>1 \ \text{and}\ \lambda_*\lambda\alpha^2\beta^2>1$.
     Under the identifications $\lambda_* = \lambda, \lambda = \lambda^{\prime}, \alpha = \sigma^2, \beta = \sigma^{\prime 2}$, this result recovers the main finding of \cite{pourkamali2022mismatched}. 
\end{remark}
\begin{remark}
    In the matched and high-SNR regime ($\lambda = \lambda_* \to \infty$), the inference term $g^{\mathrm{Gau}}(\alpha, \beta, \lambda, \lambda_*)$ tends to $-\alpha^2$, yielding a strictly negative contribution, while the overfitting term vanishes. This behavior differs from the spherical model, in which one must impose a fixed power $\beta_i$ for the $i$-th inferred signal. By contrast, in the Gaussian setting, the inference model can adapt its estimate of the true signal power and separate signal from noise with matched high SNR. Consequently, under matched high SNR conditions, over-parameterization proves more advantageous.
\end{remark}
\vspace{-0.5em}
\section{Conclusion and further work}
We derived the asymptotic mismatched matrix-MSE of Bayesian MMSE estimators for spherical and Gaussian signals corrupted by GOE noise, under mismatches in signal rank, signal power, and SNR. Our analysis reveals that the effective rank, inference rank, and overfitting rank jointly govern the structure of the asymptotic MSE, with their relative size determining the contributions of the inference term and the overfitting term.
Future directions include studying the convergence rate of the matched matrix-MSE, average-case analysis of mismatched parameters, non-asymptotic behavior of the mismatched matrix-MSE \cite{guionnet2025estimating}, mismatched high-rank inference \cite{pourkamali2024matrix, barbier2024information}, and inference under structured noise \cite{zhang2024spectral, guionnet2025low}. Another key focus is assessing the alignment between our theoretical predictions and algorithmic implementations (e.g., via AMP \cite{rangan2012iterative} or Approximate Survey Propagation \cite{antenucci2019approximate}) in the mismatched setting.

\clearpage
\bibliographystyle{IEEEtran.bst}
\bibliography{ref.bib}

\begin{thebibliography}{10}
\providecommand{\url}[1]{#1}
\csname url@samestyle\endcsname
\providecommand{\newblock}{\relax}
\providecommand{\bibinfo}[2]{#2}
\providecommand{\BIBentrySTDinterwordspacing}{\spaceskip=0pt\relax}
\providecommand{\BIBentryALTinterwordstretchfactor}{4}
\providecommand{\BIBentryALTinterwordspacing}{\spaceskip=\fontdimen2\font plus
\BIBentryALTinterwordstretchfactor\fontdimen3\font minus \fontdimen4\font\relax}
\providecommand{\BIBforeignlanguage}[2]{{%
\expandafter\ifx\csname l@#1\endcsname\relax
\typeout{** WARNING: IEEEtran.bst: No hyphenation pattern has been}%
\typeout{** loaded for the language `#1'. Using the pattern for}%
\typeout{** the default language instead.}%
\else
\language=\csname l@#1\endcsname
\fi
#2}}
\providecommand{\BIBdecl}{\relax}
\BIBdecl

\bibitem{moore2017computer}
C.~Moore, ``The computer science and physics of community detection: Landscapes, phase transitions, and hardness,'' \emph{arXiv preprint arXiv:1702.00467}, 2017.

\bibitem{abbe2018community}
E.~Abbe, ``Community detection and stochastic block models: recent developments,'' \emph{J. Mach. Learn. Res.}, vol.~18, no. 177, pp. 1--86, 2018.

\bibitem{keshavan2010matrix}
R.~H. Keshavan, A.~Montanari, and S.~Oh, ``Matrix completion from a few entries,'' \emph{IEEE Trans. Inf. Theory}, vol.~56, no.~6, pp. 2980--2998, Jun. 2010.

\bibitem{saade2015matrix}
A.~Saade, F.~Krzakala, and L.~Zdeborov{\'a}, ``Matrix completion from fewer entries: Spectral detectability and rank estimation,'' \emph{in Proc. NeurIPS}, vol.~28, 2015.

\bibitem{zou2006sparse}
H.~Zou, T.~Hastie, and R.~Tibshirani, ``Sparse principal component analysis,'' \emph{J. Comput. Graphical Stat.}, vol.~15, no.~2, pp. 265--286, Jun. 2006.

\bibitem{johnstone2009consistency}
I.~M. Johnstone and A.~Y. Lu, ``On consistency and sparsity for principal components analysis in high dimensions,'' \emph{J. Am. Stat. Assoc.}, vol. 104, no. 486, pp. 682--693, Jan. 2009.

\bibitem{bunea2011optimal}
F.~Bunea, Y.~She, and M.~H. Wegkamp, ``Optimal selection of reduced rank estimators of high-dimensional matrices,'' \emph{The Annals of Statistics}, pp. 1282--1309, Apr. 2011.

\bibitem{candes2010power}
E.~J. Cand{\`e}s and T.~Tao, ``The power of convex relaxation: Near-optimal matrix completion,'' \emph{IEEE Trans. Inf. Theory}, vol.~56, no.~5, pp. 2053--2080, May 2010.

\bibitem{chi2019nonconvex}
Y.~Chi, Y.~M. Lu, and Y.~Chen, ``Nonconvex optimization meets low-rank matrix factorization: An overview,'' \emph{IEEE Trans. Signal Process.}, vol.~67, no.~20, pp. 5239--5269, Oct. 2019.

\bibitem{chen2021spectral}
Y.~Chen, Y.~Chi, J.~Fan, C.~Ma \emph{et~al.}, ``Spectral methods for data science: A statistical perspective,'' \emph{Found. Trends{\textregistered} Mach. Learn.}, vol.~14, no.~5, pp. 566--806, Oct. 2021.

\bibitem{candes2011robust}
E.~J. Cand{\`e}s, X.~Li, Y.~Ma, and J.~Wright, ``Robust principal component analysis?'' \emph{J. ACM}, vol.~58, no.~3, pp. 1--37, Jun. 2011.

\bibitem{wainwright2019high}
M.~J. Wainwright, \emph{High-dimensional statistics: A non-asymptotic viewpoint}.\hskip 1em plus 0.5em minus 0.4em\relax Cambridge Univ. Press, 2019, vol.~48.

\bibitem{guo2005mutual}
D.~Guo, S.~Shamai, and S.~Verd{\'u}, ``Mutual information and minimum mean-square error in {G}aussian channels,'' \emph{IEEE Trans. Inf. Theory}, vol.~51, no.~4, pp. 1261--1282, Apr. 2005.

\bibitem{korada2009exact}
S.~B. Korada and N.~Macris, ``Exact solution of the gauge symmetric $p$-spin glass model on a complete graph,'' \emph{J. Stat. Phys.}, vol. 136, pp. 205--230, Jul. 2009.

\bibitem{lesieur2015mmse}
T.~Lesieur, F.~Krzakala, and L.~Zdeborov{\'a}, ``{MMSE} of probabilistic low-rank matrix estimation: Universality with respect to the output channel,'' in \emph{2015 53rd Annual Allerton Conference on Communication, Control, and Computing (Allerton)}.\hskip 1em plus 0.5em minus 0.4em\relax IEEE, Sep. 2015, pp. 680--687.

\bibitem{dia2016mutual}
M.~Dia, N.~Macris, F.~Krzakala, T.~Lesieur, L.~Zdeborov{\'a} \emph{et~al.}, ``Mutual information for symmetric rank-one matrix estimation: A proof of the replica formula,'' \emph{in Proc. NeurIPS}, vol.~29, 2016.

\bibitem{lelarge2017fundamental}
M.~Lelarge and L.~Miolane, ``Fundamental limits of symmetric low-rank matrix estimation,'' in \emph{Proc. COLT}.\hskip 1em plus 0.5em minus 0.4em\relax PMLR, 2017, pp. 1297--1301.

\bibitem{barbier2019adaptive}
J.~Barbier and N.~Macris, ``The adaptive interpolation method: a simple scheme to prove replica formulas in {B}ayesian inference,'' \emph{Probab. Theory Relat. Fields}, vol. 174, pp. 1133--1185, 2019.

\bibitem{barbier2019adaptive2}
J.~Barbier and N.~Macris, ``The adaptive interpolation method for proving replica formulas. applications to the {Curie--Weiss and Wigner} spike models,'' \emph{J. Phys. A: Math. Theor.}, vol.~52, no.~29, p. 294002, Jun. 2019.

\bibitem{salakhutdinov2008bayesian}
R.~Salakhutdinov and A.~Mnih, ``Bayesian probabilistic matrix factorization using {Markov chain Monte Carlo},'' in \emph{Proc. ICML}, Jul. 2008, pp. 880--887.

\bibitem{pearl2022reverend}
J.~Pearl, ``Reverend {B}ayes on inference engines: A distributed hierarchical approach,'' in \emph{Probabilistic and causal inference: the works of Judea Pearl}, Feb. 2022, pp. 129--138.

\bibitem{lesieur2017constrained}
T.~Lesieur, F.~Krzakala, and L.~Zdeborov{\'a}, ``Constrained low-rank matrix estimation: Phase transitions, approximate message passing and applications,'' \emph{J. Stat. Mech: Theory Exp.}, vol. 2017, no.~7, p. 073403, Jul. 2017.

\bibitem{montanari2021estimation}
A.~Montanari and R.~Venkataramanan, ``Estimation of low-rank matrices via approximate message passing,'' \emph{The Annals of Statistics}, vol.~49, no.~1, Feb. 2021.

\bibitem{fan2022approximate}
Z.~Fan, ``Approximate message passing algorithms for rotationally invariant matrices,'' \emph{The Annals of Statistics}, vol.~50, no.~1, pp. 197--224, Feb. 2022.

\bibitem{camilli2024decimation}
F.~Camilli and M.~M{\'e}zard, ``The decimation scheme for symmetric matrix factorization,'' \emph{J. Phys. A: Math. Theor.}, vol.~57, no.~8, p. 085002, Feb. 2024.

\bibitem{barbier2017layered}
J.~Barbier, N.~Macris, and L.~Miolane, ``The layered structure of tensor estimation and its mutual information,'' in \emph{2017 55th Annual Allerton Conference on Communication, Control, and Computing (Allerton)}.\hskip 1em plus 0.5em minus 0.4em\relax IEEE, Oct. 2017, pp. 1056--1063.

\bibitem{kadmon2018statistical}
J.~Kadmon and S.~Ganguli, ``Statistical mechanics of low-rank tensor decomposition,'' \emph{in Proc. NeurIPS}, vol.~31, 2018.

\bibitem{aubin2019spiked}
B.~Aubin, B.~Loureiro, A.~Maillard, F.~Krzakala, and L.~Zdeborov{\'a}, ``The spiked matrix model with generative priors,'' \emph{in Proc. NeurIPS}, vol.~32, 2019.

\bibitem{luneau2020high}
C.~Luneau, N.~Macris, and J.~Barbier, ``High-dimensional rank-one nonsymmetric matrix decomposition: the spherical case,'' in \emph{Proc. IEEE ISIT}, Jun. 2020, pp. 2646--2651.

\bibitem{pourkamali2022mismatched}
F.~Pourkamali and N.~Macris, ``Mismatched estimation of symmetric rank-one matrices under {G}aussian noise,'' in \emph{Proc. IZS 2022}.\hskip 1em plus 0.5em minus 0.4em\relax ETH Zurich, 2022, pp. 84--88.

\bibitem{camilli2022inference}
F.~Camilli, P.~Contucci, and E.~Mingione, ``An inference problem in a mismatched setting: a spin-glass model with {M}attis interaction,'' \emph{Scipost Phys.}, vol.~12, no.~4, p. 125, Apr. 2022.

\bibitem{pourkamali2022mismatched2}
F.~Pourkamali and N.~Macris, ``Mismatched estimation of non-symmetric rank-one matrices under {G}aussian noise,'' in \emph{Proc. IEEE ISIT}, Jun. 2022, pp. 1288--1293.

\bibitem{barbier2022price}
J.~Barbier, T.~Hou, M.~Mondelli, and M.~S{\'a}enz, ``The price of ignorance: how much does it cost to forget noise structure in low-rank matrix estimation?'' \emph{in Proc NeurI}, vol.~35, pp. 36\,733--36\,747, 2022.

\bibitem{fu2023mismatched}
T.~Fu, Y.~Liu, J.~Barbier, M.~Mondelli, S.~Liang, and T.~Hou, ``Mismatched estimation of non-symmetric rank-one matrices corrupted by structured noise,'' in \emph{Proc. IEEE ISIT}, Jun. 2023, pp. 1178--1183.

\bibitem{benaych2011eigenvalues}
F.~Benaych-Georges and R.~R. Nadakuditi, ``The eigenvalues and eigenvectors of finite, low rank perturbations of large random matrices,'' \emph{Adv. Math.}, vol. 227, no.~1, pp. 494--521, May 2011.

\bibitem{guionnet2005fourier}
A.~Guionnet, M.~Ma{\i} \emph{et~al.}, ``A {F}ourier view on the {R}-transform and related asymptotics of spherical integrals,'' \emph{J. Funct. Anal.}, vol. 222, no.~2, pp. 435--490, May 2005.

\bibitem{collins2007new}
B.~Collins and P.~{\'S}niady, ``New scaling of {Itzykson--Zuber} integrals,'' in \emph{Ann. Inst. Henri Poincare (B) Probab. Stat.}, vol.~43, no.~2.\hskip 1em plus 0.5em minus 0.4em\relax Elsevier, Mar. 2007, pp. 139--146.

\bibitem{guionnet2021asymptotics}
A.~Guionnet and J.~Husson, ``Asymptotics of $k$ dimensional spherical integrals and applications,'' \emph{ALEA}, vol.~19, no.~1, pp. 769--797, 2022.

\bibitem{maillard2019high}
A.~Maillard, L.~Foini, A.~L. Castellanos, F.~Krzakala, M.~M{\'e}zard, and L.~Zdeborov{\'a}, ``High-temperature expansions and message passing algorithms,'' \emph{J. Stat. Mech: Theory Exp.}, vol. 2019, no.~11, p. 113301, Nov. 2019.

\bibitem{vershynin2018high}
R.~Vershynin, \emph{High-dimensional probability: An introduction with applications in data science}.\hskip 1em plus 0.5em minus 0.4em\relax Cambridge Univ. Press, 2018, vol.~47.

\bibitem{bodin2023gradient}
A.~Bodin and N.~Macris, ``Gradient flow on extensive-rank positive semi-definite matrix denoising,'' in \emph{Proc. IEEE ITW}, Apr. 2023, pp. 365--370.

\bibitem{harish1957differential}
Harish-Chandra, ``Differential operators on a semisimple {L}ie algebra,'' \emph{Am. J. Math.}, pp. 87--120, Jan. 1957.

\bibitem{potters2020first}
M.~Potters and J.-P. Bouchaud, \emph{A first course in random matrix theory: for physicists, engineers and data scientists}.\hskip 1em plus 0.5em minus 0.4em\relax Cambridge Univ. Press, 2020.

\bibitem{guionnet2025estimating}
A.~Guionnet, J.~Ko, F.~Krzakala, and L.~Zdeborov{\'a}, ``Estimating rank-one matrices with mismatched prior and noise: universality and large deviations,'' \emph{Commun. Math. Phys.}, vol. 406, no.~1, p.~9, 2025.

\bibitem{pourkamali2024matrix}
F.~Pourkamali, J.~Barbier, and N.~Macris, ``Matrix inference in growing rank regimes,'' \emph{IEEE Trans. Inf. Theory}, vol.~70, no.~11, pp. 8133--8163, Nov. 2024.

\bibitem{barbier2024information}
J.~Barbier, J.~Ko, and A.~A. Rahman, ``Information-theoretic limits for sublinear-rank symmetric matrix factorization,'' in \emph{Proc. IZS 2024}.\hskip 1em plus 0.5em minus 0.4em\relax ETH Z{\"u}rich, 2024, pp. 16--16.

\bibitem{zhang2024spectral}
Y.~Zhang, H.~C. Ji, R.~Venkataramanan, and M.~Mondelli, ``Spectral estimators for structured generalized linear models via approximate message passing,'' in \emph{Proc. COLT}.\hskip 1em plus 0.5em minus 0.4em\relax PMLR, 2024, pp. 5224--5230.

\bibitem{guionnet2025low}
A.~Guionnet, J.~Ko, F.~Krzakala, and L.~Zdeborov{\'a}, ``Low-rank matrix estimation with inhomogeneous noise,'' \emph{Inf. Inference J. IMA}, vol.~14, no.~2, p. iaaf010, Apr. 2025.

\bibitem{rangan2012iterative}
S.~Rangan and A.~K. Fletcher, ``Iterative estimation of constrained rank-one matrices in noise,'' in \emph{Proc. IEEE ISIT}, Jul. 2012, pp. 1246--1250.

\bibitem{antenucci2019approximate}
F.~Antenucci, F.~Krzakala, P.~Urbani, and L.~Zdeborov{\'a}, ``Approximate survey propagation for statistical inference,'' \emph{J. Stat. Mech: Theory Exp.}, vol. 2019, no.~2, p. 023401, Feb. 2019.

\end{thebibliography}

\clearpage
\onecolumn
\begin{appendices}
\section{Generalized I-MMSE formula}\label{app:proof_gen_IMMSE}
We present the proof of Lemma~\ref{lem:gen_IMMSE} (Generalized I-MMSE formula) below.
\begin{proof}[Proof of Lemma~\ref{lem:gen_IMMSE}]
\vspace{-1em}
By substituting
\begin{equation*}
    \bY = \sqrt{\frac{\lambda_*}{n}} \bS_{\balpha} \bS_{\balpha}^\top + \bW
\end{equation*}
into posterior distribution, we obtain
\begin{align*}
        P\{\bX|\bY\} & \propto e^{-\frac{1}{4}\frac{\lambda}{n}\|\bX_{\bbeta}\bX_{\bbeta}^\top\|_F^2+\frac{1}{2}\frac{\sqrt{\lambda_*\lambda}}{n}\tr( \bS_{\balpha} \bS_{\balpha}^\top\bX_{\bbeta}\bX_{\bbeta}^\top)+\frac{1}{2}\sqrt{\frac{\lambda}{n}}\tr(\bW)\bX_{\bbeta}\bX_{\bbeta}^\top} P(\bX)\\
        & \propto e^{-\frac{1}{4}\frac{\lambda}{n}\|\bX_{\bbeta}\bX_{\bbeta}^\top\|_F^2+\frac{1}{2}\frac{\sqrt{\lambda_*\lambda}}{n}\tr( \bS_{\balpha} \bS_{\balpha}^\top\bX_{\bbeta}\bX_{\bbeta}^\top)
        +
        \sqrt{\frac{\lambda}{n}}\sum_{i< j}w_{ij}(\bX_{\bbeta}\bX_{\bbeta}^\top)_{ij}+\frac{1}{2}\sqrt{\frac{\lambda}{n}}\sum_{i=1}^nw_{ii}(\bX_{\bbeta}\bX_{\bbeta}^\top)_{ii}} P(\bX).
\end{align*}
The second proportionality arises due to the symmetry and distribution of the noise matrix $\bW$.
By differentiating mismatched free energy~\eqref{def: fn} with respect to $\lambda$, we obtain
\begin{equation}\label{eq:fn with lambda}
    \frac{\partial}{\partial \lambda} f_n = \frac{1}{4n^2}\EE_{\bS, \bW}\left[\langle \|\bX_{\bbeta}\bX_{\bbeta}^\top\|_F^2 \rangle\right]-\frac{1}{4n^2}\sqrt{\frac{\lambda_*}{\lambda}}\EE_{\bS, \bW}\left[\langle\tr\left(\bS_{\balpha} \bS_{\balpha}^\top\bX_{\bbeta}\bX_{\bbeta}^\top\right)\rangle\right]-\frac{1}{4n\sqrt{n\lambda}} \EE_{\bS, \bW}\left[\langle\tr\left(\bW\bX_{\bbeta}\bX_{\bbeta}^\top\right)\rangle\right],
\end{equation}
where the expectation $\langle f(\bX) \rangle$ is defined by
$\langle f(\bX) \rangle = \int f(\bX) P_{\text{mis}}(\bX \mid \bY) \, \mathrm{d}\bX$,
and $f(\bX)$ is a integrable function of the matrix $\bX$.
The expansion of the last term on the right-hand side leads to the following expression:
\begin{equation*}
    \EE_{\bS, \bW}\left[\langle\tr\left(\bW\bX_{\bbeta}\bX_{\bbeta}^\top\right)\rangle\right]=2\sum_{i< j}\EE_{\bS}\EE_{\bW}\left[\langle w_{ij}(\bX_{\bbeta}\bX_{\bbeta}^\top)_{ij}\rangle\right]+\sum_{i=1}^{n} \EE_{\bS}\EE_{\bW}\left[\langle w_{ii}(\bX_{\bbeta}\bX_{\bbeta}^\top)_{ii}\rangle\right].
\end{equation*}
Next, we apply Stein’s Lemma to handle the off-diagonal and diagonal elements.
For $i\neq j$,
\begin{align*}
    \EE_{\bW}\left[\langle w_{ij}(\bX_{\bbeta}\bX_{\bbeta}^\top)_{ij}\rangle\right]=\EE_{\bW}\left[\frac{\partial\langle w_{ij}(\bX_{\bbeta}\bX_{\bbeta}^\top)_{ij}\rangle}{\partial w_{ij}}\right]=1\cdot\sqrt{\frac{\lambda}{n}}\left(\EE_{\bW}\left[\langle(\bX_{\bbeta}\bX_{\bbeta}^\top)_{ij}^2\rangle\right] -\EE_{\bW}\left[\langle\bX_{\bbeta}\bX_{\bbeta}^\top\rangle^2_{ij}\right]\right).
\end{align*}
And for $i=j$,
\begin{align*}
    \EE_{\bW}\left[\langle w_{ii}(\bX_{\bbeta}\bX_{\bbeta}^\top)_{ii}\rangle\right]=\EE_{\bW}\left[\frac{\partial \langle w_{ii}(\bX_{\bbeta}\bX_{\bbeta}^\top)_{ii}\rangle}{\partial w_{ii}}\right]=2\cdot\frac{1}{2}\sqrt{\frac{\lambda}{n}}\left(\EE_{\bW}\left[\langle(\bX_{\bbeta}\bX_{\bbeta}^\top)_{ii}^2\rangle\right]-\EE_{\bW}\left[\langle\bX_{\bbeta}\bX_{\bbeta}^\top\rangle^2_{ii}\right]\right).
\end{align*}
Substituting the results back into \eqref{eq:fn with lambda}, we obtain
\begin{align*}
    \frac{\partial}{\partial \lambda} f_n = \frac{1}{4n^2}\EE_{\bS,\bW}\left[\|\langle\bX_{\bbeta}\bX_{\bbeta}^\top\rangle\|_F^2\right]-\frac{1}{4n^2}\sqrt{\frac{\lambda_*}{\lambda}}\EE_{\bS, \bW}\left[\langle\tr\left(\bS_{\balpha} \bS_{\balpha}^\top\bX_{\bbeta}\bX_{\bbeta}^\top\right)\rangle\right].\\
\end{align*}
By differentiating mismatched free energy~\eqref{def: fn} with respect to $\lambda_*$, we have
\begin{equation*}
    \frac{\partial}{\partial \lambda_*} f_n = -\frac{1}{4n^2}\sqrt{\frac{\lambda_*}{\lambda}}\EE_{\bS, \bW}\left[\langle\tr\left(\bS_{\balpha} \bS_{\balpha}^\top\bX_{\bbeta}\bX_{\bbeta}^\top\right)\rangle\right].
\end{equation*}
In summary, we obtain
\begin{align*}
    &\frac{\partial}{\partial \lambda} f_n + \left(2 - \sqrt{\frac{\lambda_*}{\lambda}}\right) \sqrt{\frac{\lambda_*}{\lambda}} \frac{\partial}{\partial \lambda_*} f_n + \frac{1}{4}\sum_{i=1}^r \alpha_i^2 \\
    &=\frac{1}{4n^2}\left\{\EE_{\bS,\bW}\left[\|\langle\bX_{\bbeta}\bX_{\bbeta}^\top\rangle\|_F^2\right]-\sqrt{\frac{\lambda_*}{\lambda}}\EE_{\bS, \bW}\left[\langle\tr\left(\bS_{\balpha} \bS_{\balpha}^\top\bX_{\bbeta}\bX_{\bbeta}^\top\right)\rangle\right]\right.\\
    &\qquad\qquad\left.-\bigg(2 - \sqrt{\frac{\lambda_*}{\lambda}}\bigg)\EE_{\bS, \bW}\left[\langle\tr\left(\bS_{\balpha} \bS_{\balpha}^\top\bX_{\bbeta}\bX_{\bbeta}^\top\right)\rangle\right]+\EE_{\bS,\bW} \left[ \|\bS_{\balpha} \bS_{\balpha}^\top\|_F^2\right]\right\}\\
    &=\frac{1}{4n^2}\left\{ \EE_{\bS,\bW}\left[\|\langle\bX_{\bbeta}\bX_{\bbeta}^\top\rangle\|_F^2\right] -2 \EE_{\bS, \bW}\left[\langle\tr\left(\bS_{\balpha} \bS_{\balpha}^\top\bX_{\bbeta}\bX_{\bbeta}^\top\right)\rangle\right]+\EE_{\bS,\bW} \left[ \|\bS_{\balpha} \bS_{\balpha}^\top\|_F^2\right] \right\}\\
    &=\frac{1}{4 n^2} \EE_{\bS, \bW} \left[ 
    \| \bS_{\bm{\alpha}} \bS_{\bm{\alpha}}^{\top} - M_{\text{mis}}(\bY) \|_F^2
    \right]=\frac{1}{4}\mathrm{MSE}_{n}.
\end{align*}
\end{proof}

\section{Computation of the asymptotic mismatched MSE}\label{app: main them}

\subsection{Rank-k sphere integral with small eigenvalues} \label{app_subsec:k_sphe_int_small}
We present the asymptotic behavior of rank-$k$  sphere integrals with small eigenvalues $\{\eta_i\}$ in the following theorem. 
\begin{theorem}[Thm. 6 in \cite{collins2007new}]\label{thm:low_SNR_k_sphere}
    Suppose $\bA$ admits a limiting eigenvalue distribution $\mu_A$ and bound $\{\eta_i\}_{i=1}^m$ as
    \begin{equation*}
        \text{H}_{\min} \leq \eta_m \leq \eta_1 \leq H_{\max},
    \end{equation*}
    then we have 
    \begin{equation*}
        \lim_{n \to \infty}\frac{1}{n} \ln I_{n}(\bA, \{\eta_i\}_{i=1}^m) = \sum_{i=1}^m \lim_{n \to \infty} \frac{1}{n}\ln I_{n}(\bA, \eta_i),
    \end{equation*}
    where 
    \begin{equation*}
        H_{\min}= \lim_{z \uparrow \lambda_{\min(\bA)}} H_{\mu_{\bA}}(z), \quad  H_{\max} = \lim_{z \downarrow \lambda_{\max(\bA)}} H_{\mu_{\bA}}(z).
    \end{equation*}
\end{theorem}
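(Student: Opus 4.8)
The plan is to prove the identity by establishing matching lower and upper bounds on $\tfrac1n\ln I_n(\bA,\{\eta_i\}_{i=1}^m)$, both equal to $\sum_{i=1}^m\ell(\eta_i)$, where $\ell(\eta):=\lim_{n\to\infty}\tfrac1n\ln I_n(\bA,\eta)$ is the rank-one limit (which exists, is finite, and is continuous on $[H_{\min},H_{\max}]$ by the Guionnet--Maïda formula \cite{guionnet2005fourier}, i.e.\ the $m=1$ case of Theorem~\ref{thm:high_SNR_sphere_int}). The basic object is the frame representation: letting $\boldsymbol{v}_1,\dots,\boldsymbol{v}_m$ be the first $m$ columns of the Haar-distributed $\boldsymbol{O}$ — a uniformly random orthonormal $m$-frame in $\RR^n$ — one has $\boldsymbol{O}\,\mathrm{diag}_n(\eta_1,\dots,\eta_m,0,\dots,0)\,\boldsymbol{O}^{\top}=\sum_{i=1}^m\eta_i\,\boldsymbol{v}_i\boldsymbol{v}_i^{\top}$, hence
\begin{equation*}
    I_n(\bA,\{\eta_i\}_{i=1}^m)=\EE\Bigl[\exp\Bigl(n\sum_{i=1}^m\eta_i\,\boldsymbol{v}_i^{\top}\bA\boldsymbol{v}_i\Bigr)\Bigr].
\end{equation*}

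For the lower bound I would realise the frame $(\boldsymbol{v}_i)$ as the Gram--Schmidt orthonormalisation of $m$ i.i.d.\ vectors $\boldsymbol{u}_1,\dots,\boldsymbol{u}_m$ uniform on $\SS^{n-1}$. On the event $\{\,|\langle\boldsymbol{u}_i,\boldsymbol{u}_j\rangle|\le\varepsilon\text{ for all }i<j\,\}$ one has $\|\boldsymbol{v}_i-\boldsymbol{u}_i\|=O(\varepsilon)$, so the exponents differ by $O(n\varepsilon\|\bA\|)$; and for the i.i.d.\ vectors the factorisation $\EE[\prod_i e^{n\eta_i\boldsymbol{u}_i^{\top}\bA\boldsymbol{u}_i}]=\prod_i I_n(\bA,\eta_i)$ holds exactly. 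The delicate point is that the complementary event has probability only $e^{-cn\varepsilon^2}$, with the \emph{same} fixed rate in $\varepsilon$, so a crude union bound loses a constant and one cannot simply send $\varepsilon\downarrow0$; the clean route is to combine this delocalisation idea with the large-deviation principle of \cite{guionnet2005fourier,collins2007new} for the joint profile recording the overlaps of $(\boldsymbol{v}_i)$ with the eigenbasis of $\bA$, whose rate function, throughout the subcritical range $H_{\min}\le\eta_i\le H_{\max}$, is minimised exactly at the factorised (delocalised) configuration, yielding $\liminf\tfrac1n\ln I_n(\bA,\{\eta_i\})\ge\sum_i\ell(\eta_i)$. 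This is precisely where the hypothesis is used: above the threshold the $\boldsymbol{v}_i$ would compete to align with the few top eigenvectors of $\bA$, the orthogonality constraint would become binding, and this inequality could fail.

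The upper bound is the main obstacle, since Hölder's inequality yields only the non-tight estimate $\tfrac1n\ln I_n(\bA,\{\eta_i\})\le\tfrac1m\sum_i\tfrac1n\ln I_n(\bA,m\eta_i)$. Instead I would pass to the Gaussian representation, writing the uniform frame (scaled by $\sqrt n$) as a constrained Gaussian field,
\begin{equation*}
    I_n(\bA,\{\eta_i\}_{i=1}^m)=\frac{\displaystyle\int\Bigl(\prod_{i\le j}\delta(\boldsymbol{g}_i^{\top}\boldsymbol{g}_j-n\,\delta_{ij})\Bigr)\exp\Bigl(\sum_{i=1}^m\eta_i\,\boldsymbol{g}_i^{\top}\bA\boldsymbol{g}_i\Bigr)\mathrm{d}\boldsymbol{G}}{\displaystyle\int\Bigl(\prod_{i\le j}\delta(\boldsymbol{g}_i^{\top}\boldsymbol{g}_j-n\,\delta_{ij})\Bigr)\mathrm{d}\boldsymbol{G}},
\end{equation*}
representing each norm constraint $\|\boldsymbol{g}_i\|^2=n$ by a dual variable $s_i$ and each orthogonality constraint $\boldsymbol{g}_i^{\top}\boldsymbol{g}_j=0$ by a dual variable $t_{ij}$ through complex exponentials, and carrying out the resulting Gaussian integral over $\boldsymbol{G}$; this reduces $\tfrac1n\ln I_n$ to a finite-dimensional saddle-point problem in $\{s_i,t_{ij}\}$. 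The key structural fact is that $\{t_{ij}=0\text{ for all }i<j\}$ is a critical point: there the $\boldsymbol{g}_i$ decouple into centred Gaussians so $\EE[\boldsymbol{g}_i^{\top}\boldsymbol{g}_j]=0$, and the quadratic-form matrix is block-diagonal so the $t_{ij}$-gradient vanishes; at this critical point the integral factorises into $m$ copies of the rank-one constrained Gaussian integral, each reproducing the Guionnet--Maïda saddle point and hence contributing $\ell(\eta_i)$. The condition $H_{\min}\le\eta_m\le\eta_1\le H_{\max}$, with $H_{\max}=\lim_{z\downarrow\lambda_{\max}(\bA)}H_{\mu_{\bA}}(z)$, is exactly what guarantees that for each $i$ the corresponding rank-one saddle lies in the region where the $\boldsymbol{g}_i$-integral converges, so that the $\{t_{ij}=0\}$ critical point is admissible. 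Proving that it is the \emph{dominant} saddle — justifying the contour deformation and the concentration of the $\boldsymbol{G}$-integral around it — is the technically delicate step; here I would either quote the large-deviation analysis of \cite{collins2007new,guionnet2005fourier} directly, or run a convexity/interpolation argument in the parameters $\eta_i$, starting from the trivial identity at $\eta_i\equiv0$ and controlling the $\eta$-derivatives. Above the threshold this mechanism is replaced by alignment of some $\boldsymbol{v}_i$ with the top eigenvectors of $\bA$, where the orthogonality constraints become active and the factorisation genuinely breaks; that regime is the content of Theorem~\ref{thm:high_SNR_sphere_int}, whose terms $J(\mu_{\bA},2\eta_i,\gamma_i)$ then supersede the naive sum.
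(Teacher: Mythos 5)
First, a point of reference: the paper does not prove this statement at all --- it is imported verbatim as Theorem~6 of \cite{collins2007new}, and the remark following it in the appendix only records when its hypotheses fail in the paper's setting. So there is no in-paper proof to compare your argument against; the question is whether your sketch stands on its own, and at present it does not.

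Your frame representation, the exact factorisation for i.i.d.\ sphere vectors, the observation that H\"older gives only the non-tight bound $\frac{1}{n}\ln I_n(\bA,\{\eta_i\}_{i=1}^m)\le\frac{1}{m}\sum_i\frac{1}{n}\ln I_n(\bA,m\eta_i)$, and the identification of where the hypothesis $\eta_1\le H_{\max}$ enters are all correct. The gap is that both of your decisive steps are deferred to the very analysis the theorem summarises. For the lower bound you correctly diagnose that the Gram--Schmidt coupling fails quantitatively (the bad event has probability $e^{-cn\varepsilon^2}$ while the integrand can be of size $e^{Cn}$, so the discarded mass need not be negligible against $e^{n\sum_i\ell(\eta_i)}$, with $\ell$ your rank-one limit, unless $\varepsilon$ is of order one), and your proposed repair is to invoke the large-deviation principle for the joint overlap profile of the frame with the eigenbasis of $\bA$ together with the identification of its minimiser --- but establishing that LDP and showing its rate function is minimised at the delocalised configuration throughout $[H_{\min},H_{\max}]$ \emph{is} the proof of Theorem~6 in \cite{collins2007new}; citing it makes the argument circular rather than a reduction to something simpler. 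For the upper bound, the delta-function/dual-variable representation is a replica-style device: the Fourier representation of the constraints, the contour deformation, and above all the proof that the critical point $\{t_{ij}=0\}$ is the \emph{dominant} saddle (not merely a critical point) are exactly the hard content, and you leave them to the same references or to an unspecified convexity/interpolation argument. Nothing you wrote is false, but as it stands the proposal is a road map whose two load-bearing steps are outsourced. A more economical self-contained route within this paper's toolkit: in the range where $2\eta_i<H_{\mu_{\bA}}(\gamma_i)$ for all $i$, Theorem~\ref{thm:high_SNR_sphere_int} gives $v(\mu_{\bA},2\eta_i,\gamma_i)=H_{\mu_{\bA}}^{-1}(2\eta_i)$, so each $J(\mu_{\bA},2\eta_i,\gamma_i)$ is independent of $\gamma_i$ and equals twice the rank-one limit of \cite{guionnet2005fourier}; the factorisation then follows as a corollary of Theorem~\ref{thm:high_SNR_sphere_int} rather than needing a separate argument (taking care that the threshold there is expressed through $H_{\mu_{\bA}}(\gamma_i)$ and a factor of $2$, not through $H_{\max}$ as in the present statement).
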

\begin{remark}
    In our setting, $H_{\max} = \min \{1, 1/ \sqrt{\lambda_*} \alpha_1\}$ and $\eta_i = \sqrt{\lambda } \beta_i / 2$. Thus, when $\sqrt{\lambda} \beta_1 > 2\min \{1, 1/ \sqrt{\lambda_*} \alpha_1\}$,  the conditions of the theorem are no longer satisfied.
\end{remark}

\subsection{Calculation of mismatched free energy}\label{app: fn}
We provide the proof of Theorem~\ref{thm:free_enery_sph}, which establishes the asymptotic expression for the mismatched free energy.
\begin{proof}[Proof of Theorem~\ref{thm:free_enery_sph}]
For the matrix \( \frac{\bY}{\sqrt{n}} \), the $d$ largest outliers converge to $\gamma_i=\sqrt{\lambda_{*}}\alpha_i+\frac{1}{\sqrt{\lambda_{*}}\alpha_i},\forall i=1,...,d$, as $n\to \infty$, respectively~\cite{benaych2011eigenvalues}. We proceed by separately considering the cases $k \leq d$ and $k>d$.
\begin{itemize}
    \vspace{1em}
    \item Case 1: $k \leq d$

    \vspace{1em}
    We set $\lambda_i = 0$ for $i = k+1, \cdots, d$, then by Theorem~\ref{thm:high_SNR_sphere_int}, we obtain 
    \begin{equation}\label{case1:free energy}
        \lim_{n\to \infty}f_n=\frac{\lambda}{4}\sum_{i=1}^{k}\beta_i^2 - \frac{1}{2}\sum_{i=1}^k J(\mu_{SC},\sqrt{\lambda}\beta_i,\sqrt{\lambda_{*}}\alpha_i+\frac{1}{\sqrt{\lambda_{*}}\alpha_i}).
    \end{equation}
    For $1\leq i \leq c$, the conditions $\sqrt{\lambda_{*}}\alpha_i\geq1$ and $\sqrt{\lambda_{*}\lambda}\alpha_{i}\beta_{i}\geq 1$ hold.
    The function $J$ is given by:
    \begin{align*}
        J(\mu_{SC},\sqrt{\lambda}\beta_i,\sqrt{\lambda_{*}}\alpha_i+\frac{1}{\sqrt{\lambda_{*}}\alpha_i}) = \sqrt{\lambda}\beta_i(\sqrt{\lambda_{*}} \alpha_i+\frac{1}{\sqrt{\lambda_{*}} \alpha_i})-\frac{1}{2\lambda_{*}\alpha_i^{2}}-\ln(\sqrt{\lambda_{*}\lambda}\beta_i\alpha_i)-1,
    \end{align*}
    with 
    \begin{equation*}
        v(\mu_{SC},\sqrt{\lambda}\beta_i,\sqrt{\lambda_{*}}\alpha_i+\frac{1}{\sqrt{\lambda_{*}}\alpha_i})=\sqrt{\lambda_{*}}\alpha_i+\frac{1}{\sqrt{\lambda_{*}}\alpha_i}.
    \end{equation*}
    For $c+1\leq i \leq k$,  due to $\sqrt{\lambda_{*}}\alpha_i\geq1$ and $\sqrt{\lambda_{*}\lambda}\alpha_{i}\beta_{i}\leq 1$, we deduce that $\sqrt{\lambda}\beta_i\leq1$.
    Therefore,
    \begin{equation*}
        v(\mu_{SC},\sqrt{\lambda}\beta_i,\sqrt{\lambda_{*}}\alpha_i+\frac{1}{\sqrt{\lambda_{*}}\alpha_i})=\sqrt{\lambda}\beta_i+\frac{1}{\sqrt{\lambda}\beta_i}.
    \end{equation*}
    The corresponding $J$ is given by:
    \begin{equation*}
        J(\mu_{SC},\sqrt{\lambda}\beta_i,\sqrt{\lambda_{*}}\alpha_i+\frac{1}{\sqrt{\lambda_{*}}\alpha_i}) = \frac{1}{2}\lambda\beta_i^2.
    \end{equation*}
    Therefore in the case $k\leq d$, we obtain
    \begin{equation*}
        \lim_{n \to \infty} f_n = \sum_{i=1}^c \left[\frac{\lambda}{4} \beta_i^2 - \frac{\sqrt{\lambda}}{2} \beta_i (\sqrt{\lambda_*} \alpha_i + \frac{1}{\sqrt{\lambda_*} \alpha_i}) + \frac{1}{4 \lambda_* \alpha_i^2} + \frac{1}{2} \ln (\sqrt{\lambda_* \lambda } \alpha_i \beta_i) + \frac{1}{2} \right].
    \end{equation*}
    \vspace{1em}
    
    \item Case 2: $k > d$
    
    Due to $k$ exceeds the effective rank $d$, we introduce additional $k-d$ signal components $\bs_{r+1},\cdots,\bs_{r+k-d}$ located in $\SS^{n-1}(\sqrt{n})$, such that orthogonality with \(\bs_1, \ldots, \bs_r\). The signal strengths is specified as $\alpha_i=(1+\epsilon)/\sqrt{\lambda_{*}}, \forall i=r+1,\cdots,r+k-d$. Accordingly, the true model becomes
    \begin{equation*}
        \bY^{\epsilon} = \sqrt{\frac{\lambda_*}{n}} \left( \sum_{i=1}^r \alpha_i \bs_i \bs_i^{\top} + \frac{1+\epsilon}{\sqrt{\lambda_*}} \sum_{i=r+1}^{r+k-d} \bs_i \bs_i^{\top} + \bW\right).
    \end{equation*}
    We choose $\epsilon>0$  sufficiently small so that the inference rank of $\bY$ and $\bY^{\epsilon}$ are identical. Under this perturbation, the number of outliers of $\bY^{\epsilon}$ becomes $k$, with outlier eigenvalues
    \begin{equation*}
        \gamma_i = \sqrt{\lambda_*} \alpha_i + \frac{1}{\sqrt{\lambda_*} \alpha_i}, \; i=1, \cdots, d, \quad \quad \gamma_i = 1+\epsilon + \frac{1}{1+\epsilon}, \; i = d+1, \cdots k.
    \end{equation*}
    As $\epsilon \to 0$, the limiting spectral distributions and outliers of $\bY$ and $\bY^{\epsilon}$ coincide, implying that their asymptotic associated free energies are identical in the limit. Thus, we compute the asymptotic free energy using $\bY^{\epsilon}$ and then take $\epsilon \to 0$, yielding
    \begin{equation*}
        \lim_{\epsilon \to 0} \lim_{n \to \infty} f_n^{\epsilon} = \frac{\lambda}{4} \sum_{i=1}^k \beta_i^2 - \frac{1}{2} \sum_{i=1}^{d} J(\mu_{\text{SC}}, \sqrt{\lambda} \beta_i, \gamma_i) - \frac{1}{2} \sum_{i=d+1}^{k} \lim_{\epsilon \to 0} J(\mu_{\text{SC}}, \sqrt{\lambda} \beta_i, 1+\epsilon + \frac{1}{1 + \epsilon}).
    \end{equation*}
    For $i = d+1, \cdots e$, where $\sqrt{\lambda } \beta_i > H_{\text{SC}}(1+\epsilon + (1+\epsilon)^{-1})$, we find
    \begin{equation*}
        \lim_{\epsilon \to 0} J(\mu_{\text{SC}}, \sqrt{\lambda} \beta_i, 1+\epsilon + \frac{1}{1 + \epsilon}) =  2 \sqrt{\lambda} \beta_i - \ln (\sqrt{\lambda} \beta_i) -\frac{3}{2}.
    \end{equation*}
    For $i > e$, where $\sqrt{\lambda } \beta_i \leq H_{\text{SC}}(1+\epsilon + (1+\epsilon)^{-1})$, we obtain
    \begin{equation*}
        \lim_{\epsilon \to 0} J(\mu_{\text{SC}}, \sqrt{\lambda} \beta_i, 1+\epsilon + \frac{1}{1 + \epsilon}) = \frac{1}{2} \lambda \beta_i^2.
    \end{equation*}
    Thus, the final expression for the asymptotic mismatched free energy is 
    \begin{equation*}
        \lim_{n \to \infty} f_n = \lim_{\epsilon \to 0} \lim_{n \to \infty} f_n^{\epsilon} = \sum_{i=1}^ch_1(\lambda_{*},\lambda,\alpha_{i},\beta_{i})+\sum_{i=d+1}^eh_2(\lambda,\beta_{i}),
    \end{equation*}
    where 
    \begin{equation*}
        h_1(\lambda_{*},\lambda,\alpha_{i},\beta_{i})=\frac{\lambda_{}}{4}\beta_i^2-\frac{\sqrt{\lambda}}{2}\beta_i(\sqrt{\lambda_{*}} \alpha_i+\frac{1}{\sqrt{\lambda_{*}} \alpha_i}) +\frac{1}{4\lambda_{*}\alpha_i^2}+\frac{1}{2} \ln(\sqrt{\lambda_{*}\lambda}\beta_i\alpha_i)+\frac{1}{2},
    \end{equation*}
    and 
    \begin{equation*}
         h_2(\lambda,\beta_{i})=\frac{\lambda}{4}\beta_i^2-\sqrt{\lambda}\beta_i +\frac{1}{2} \ln(\sqrt{\lambda}\beta_i)+\frac{3}{4}.
    \end{equation*}
\end{itemize}
This completes the proof. 
\end{proof}

\subsection{Calculation of asymptotic mismatched MSE}
We now prove the main result, Theorem~\ref{thm:sph-MSE}.
\begin{proof}[Proof of Theorem~\ref{thm:sph-MSE}]
    Given the asymptotic expression for the mismatched free energy in Theorem~\ref{thm:free_enery_sph}, we compute the derivatives:
    \begin{equation*}
        \begin{aligned}
            & \frac{\partial h_1}{\partial \lambda} = \frac{\beta_i^2}{4}-\frac{\beta_i}{4\sqrt{\lambda}}(\sqrt{\lambda_*} \alpha_i+\frac{1}{\sqrt{\lambda_*} \alpha_i})+\frac{1}{4\lambda}, \quad \frac{\partial h_1}{\partial \lambda_*} = -\frac{\beta_i}{4}\sqrt{\frac{\lambda}{\lambda_*}}(\alpha_i-\frac{1}{\lambda_*\alpha_i})-\frac{1}{4\lambda_*^2\alpha_i^{2}}+\frac{1}{4\lambda_*}, \\
            &\frac{\partial h_2}{\partial \lambda} =\frac{1}{4}(\beta_i-\frac{1}{\sqrt{\lambda}})^2, \quad \frac{\partial h_2}{\partial \lambda_*} = 0.
        \end{aligned}
    \end{equation*}
    Substituting these expressions into Lemma~\ref{lem:gen_IMMSE}, we obtain
    \begin{equation*}
        \lim_{n \to \infty} \mathrm{MSE}_n^{\mathrm{Sph}}(\bm{\alpha}, \bm{\beta}, \lambda_*, \lambda,) = \sum_{i=1}^c g^{\text{Sph}}(\alpha_i, \beta_i, \lambda_*, \lambda) + \sum_{i=d+1}^c \left( \beta_i - \frac{1}{\sqrt{\lambda}}\right)^2 + \sum_{i=1}^r \alpha_i^2,
    \end{equation*}
    where 
    \begin{equation*}
        g^{\mathrm{Sph}}(\alpha, \beta, \lambda_*, \lambda) =\beta\left(\beta-2\alpha\right) +\frac{2\beta}{\alpha}\left(\frac{1}{\lambda_*}-\frac{1}{\sqrt{\lambda_{*}\lambda} }\right)\nonumber
    +2\sqrt{\frac{\lambda_{*}}{\lambda}}\left(\frac{1}{\lambda_{*}}-\frac{1}{\alpha^{2}\lambda_{*}^{2}}\right)+\frac{1}{\alpha^{2}\lambda_{*}\lambda}.
    \end{equation*}
    This completes the proof.
\end{proof}
\end{appendices}

\end{document}